\documentclass[reprint,amsmath,amssymb,aps,pra,superscriptaddress]{revtex4-1}
\usepackage{hyperref}
\usepackage{graphicx}
\usepackage{dcolumn}
\usepackage{bm}
\usepackage{color}
\usepackage{xcolor}
\usepackage{bbold}
\usepackage{soul}
\usepackage{qcircuit}
\usepackage{epsfig,epic}
\usepackage{epstopdf}
\usepackage{amsthm}
\usepackage{subcaption}
\usepackage{enumerate}
\usepackage{cleveref}
\newcommand{\newc}{\newcommand}
\newc{\beq}{\begin{equation}}
\newc{\eeq}{\end{equation}}
\newc{\kt}{\rangle}
\newc{\br}{\langle}
\newc{\red}{\textcolor{red}}
\newc{\beqa}{\begin{eqnarray}}
\newc{\eeqa}{\end{eqnarray}}
\newc{\tr}{\mbox{tr}}
\newc{\ovl}{\overline}
\usepackage[flushleft]{threeparttable} 
\usepackage{tabularx}
\usepackage{array}
\usepackage{soul}


\newtheorem{theorem}{Theorem}[section]
\newtheorem{corollary}{Corollary}[theorem]
\newtheorem{proposition}{Proposition}
\newtheorem{lemma}[theorem]{Lemma}
\newc{\longra}{\longrightarrow}

\newc{\blue}{\textcolor{blue}}

\hypersetup{
    colorlinks=true,
    linkcolor=blue,
    filecolor=magenta,      
    urlcolor=cyan,
}
\makeatletter
\let\Hy@backout\@gobble
\makeatother

\begin{document}

\title{Entanglement measures of bipartite quantum gates and their thermalization\\
 under arbitrary interaction strength}
\author{Bhargavi Jonnadula}
\address{School of Mathematics, University of Bristol, BS8 1UG, United
Kingdom}
\author{Prabha Mandayam} 
\affiliation{Department of Physics, Indian Institute of Technology Madras, Chennai, India~600036}
\author{Karol {\.Z}yczkowski}
\affiliation{Smoluchowski Institute of Physics, Jagiellonian University, Cracow, Poland}
\address{Center for Theoretical Physics, Polish Academy of Sciences, Warsaw, Poland}
\author{Arul Lakshminarayan}
\affiliation{Department of Physics, Indian Institute of Technology Madras, Chennai, India~600036}
\affiliation{Max-Planck-Institut f\"{u}r Physik komplexer Systeme, N\"{o}thnitzer Strasse 38, 01187 Dresden, Germany}


\date{October 4, 2020}

\begin{abstract}

Entanglement properties of bipartite unitary operators are studied via their local invariants, namely the entangling power $e_p$ and a complementary quantity, the gate typicality $g_t$. 
We characterize the boundaries of the set $K_2$ representing all two-qubit gates projected onto the plane $(e_p, g_t)$
showing  that the fractional powers of the \textsc{swap} operator form a parabolic boundary of $K_2$, 
while the other bounds are formed by two straight lines. 
In this way a family of gates with extreme properties is identified and analyzed. We also show that the parabolic curve representing powers of \textsc{swap} persists in the set $K_N$, for gates of higher dimensions ($N>2$). Furthermore, we study entanglement of bipartite quantum gates applied sequentially $n$ times and analyze the influence of interlacing local unitary operations, which model generic Hamiltonian dynamics. 
An explicit formula for the entangling power a gate applied $n$ times averaged over random local unitary dynamics
is derived for an arbitrary dimension of each subsystem.
This quantity shows an exponential saturation to the value predicted by the
random matrix theory (RMT), indicating ``thermalization" in the
entanglement properties of sequentially applied quantum gates that can have arbitrarily small, but
nonzero, entanglement to begin with.
The thermalization is further characterized by the 
spectral properties of the reshuffled and partially transposed unitary matrices.
  

\end{abstract}

\maketitle

\section{Introduction}

A clutch of quantities such as state entanglement, operator entanglement, operator
scrambling, out-of-time-ordered correlators, and various measures of mutual information
are being currently actively pursued as a means to understand information transport in
complex quantum systems and to characterize quantum chaos
\cite{Swingle2016,Nahum2017,Helu2017,Luitz2017,Akshay2018,Hosur2016,Chalker2018,Keyserlingk2018}.
Entangling power of the time evolution operator has
been studied since its introduction as a state independent measure
\cite{Zanardi2000,Zanardi2001,Wang2002}, and is the average entanglement an operator
produces when acting on product, unentangled, states. On the other hand, operator entanglement quantifies to what extent a given operator,
 treated as a vector in the Hilbert-Schmidt space of operators,
  is close to the tensor product \cite{Zyczkowski2004}. Operator entanglement and
entangling power have both recently been applied to many-body systems, in particular
in the context of spin-chains and conformal field theories \cite{Luitz2017,Dubail17,
PalLak2018}, where it has been found useful to distinguish between integrable and
non-integrable systems as well as in the analysis of the many-body-localization
transition.


The study of the operator entanglement and entangling power of the time evolution operator $\exp( -i Ht/\hbar)$, or its time-ordered version
if the Hamiltonian $H$ is a function of time, is of fundamental import in the growth of subsystem entropy and complexity of 
closed systems ranging from two large bipartite systems to
quantum spins on lattices \cite{Lubkin1993, MillerSarkar1999,BandyoLak2002,Fujisaki2003, BandyoLak2004,Dobrzanski2004,Linden2009,Chaudhury2009,Neill16, Lakshminarayan2016,Schuch2008,Abreu2007, Calaberese2018, Bertini2019,Bravyi2007,PetitJean2006,Madhok2008, Jethin2020}. The action of this time-evolution operator on unentangled states generally creates multipartite entanglement. From another perspective, the Heisenberg evolution results in operator entanglement and scrambling in the space of operators~\cite{Nahum2017,Chalker2018,Keyserlingk2018, Moudgalya2019}. A central aspect of this paper is the study of dynamics of quantum entanglement in products of unitary matrices, which are interpreted as  time-evolution operators,  with the number of terms in the product playing the role of discrete time.

From the point of view of quantum computing \cite{NielsenChuang}, gate operations ordered in time are the source of information transfer. 
Products of unitary operators are therefore natural objects to study as they form building blocks for quantum algorithms.
Random quantum circuits with random unitary operators providing interaction among qubits have been studied in this context \cite{Emerson2003,Harrow2009, KZ13}. They are known to be approximate unitary $t-$designs that simulate Haar distributed unitaries \cite{DCEL09,BHM16}. 
Models of random quantum circuits have been studied in many other contexts including 
randomised benchmarking \cite{EAZ05},  entanglement spreading,
 scrambling and many-body localization \cite{Chalker2018,Keyserlingk2018,HuseCirac2018}.


Random quantum circuits are constructed by arbitrarily choosing pairs of quNits between which the interactions are described by random unitary matrices \cite{CNZ10}, typically Haar distributed. In a departure from this standard formalism, we are primarily interested in the role of {\it local} random unitaries with a tensor product structure, which interlace sequential dynamics described by a fixed non-local gate $V$ acting on a bipartite system. 
Such a bipartite structure could form a building block for more general random quantum circuits with fixed, possibly atypical, nonlocal gates and random or generic local interaction.

Another setting in which products of  nonlocal unitaries interspersed with local ones arise naturally are in kicked systems which are being extensively used. In this context, the object of interest could be powers of the Floquet operator $U$ \cite{BandyoLak2002,BandyoLak2004,Dobrzanski2004,PalLak2018,Luitz2017}, or if the local 
Hamiltonians are non-autonomous, products of propagators across consecutive periods of the kicking. In particular, let
\beq
\label{eq:kickhamil}
H=H_A(t)\otimes \mathbb{1}_B + \mathbb{1}_A \otimes H_B(t) +  H_{AB}
\sum_{n=-\infty}^{\infty} \delta(t/\tau -n),
\eeq
be the Hamiltonian and
\beq
u_{A j}=\mathcal{T}e^{-i \int_{(j-1)\tau}^{j\tau} H_{A}(t) dt},
\eeq  
where $\mathcal{T}$ denotes time ordering.
The time evolution operator between (just before) kicks $j-1$ and $j$ is 
\beq
\mathcal{U}_j= \left( u_{A\, j} \otimes u_{B\, j}\right) \, U
\eeq
where $U=e^{-i \tau H_{AB}}$ arises from the non-local interaction at time $j\tau$ and $u_{Bj}$ is defined similar to $u_{Aj}$.
This is the Floquet operator across the time period $\tau$ between the $j-1$ and $j^{th}$ kicks.
The propagator across $n$ kicks is 
\beq
\label{eq:Upower(n)}
\mathcal{U}^{(n)}=\mathcal{T} \prod_{j=1}^n \mathcal{U}_j= \mathcal{T} \prod_{j=1}^n(u_{A_{j}}\otimes u_{B_{j}})\, U.
\eeq
The brackets around the ``power" $n$ is to indicate that there are $n$ different terms in the product
generally and the time ordering will be assumed below and hence not explicitly indicated. Most systems that have been studied are such the $u_{Aj}$ and $u_{Bj}$ are independent 
of $j$, that is the local Hamiltonians are autonomous. This leads to a time-periodic
system with $\mathcal{U}$ as the Floquet operator and we are then interested in the powers $\mathcal{U}^n$. 
However, a product of unitary operators with different local operators occur in contexts such as time-dependent quenches, 
for example see the study \cite{Mishra_2014} for quenched kicked Ising spin chains.

Past work has shown that while bipartite local unitaries $u_A\otimes u_B$ have no entangling power, layering or
interspersing them in time with entangling gates provides a crucial role for random local unitaries \cite{JMZL2017,Mandarino2018}. 
Local unitary gates are easier to apply in an experiment
and are thus  naturally ``cheaper" than nonlocal entangling gates. However, the role of such local unitary gates in creating Haar random unitaries or in achieving thermalization is to our knowledge not sufficiently explored. Specifically, we focus here on the thermalization of 
the entangling power of the unitary operator $\mathcal{U}^{(n)}$, as defined in Eq.~(\ref{eq:Upower(n)}),
 where thermalization in understood to mean that  after a certain number $n$
of interaction times the quantities studied reach the typical values corresponding to the Haar average over the unitary group.
We find that the natural quantities to study are indeed the entangling power, $e_p$, and a complementary quantity defined in~\cite{JMZL2017}, as the ``gate typicality", $g_t$. In particular, we are interested in the entangling power and gate typicality of $\mathcal{U}^{(n)}$. The importance of the operator entanglement and entangling power stems from
their invariance under local unitary operations and hence measure the essential nonlocal content of the process. 

This paper contains two complementary but in some ways distinct motivations and results, which for the sake 
of the convenience of the reader we enumerate below.
\begin{enumerate}[(i)]
\item 
The first part of this paper is dedicated to visualizing the entanglement landscape of bipartite gates in dimensions $N^2$, in terms of entangling power and gate typicality. For the case of qubits, $N=2$, the picture is complete and we show in detail the various gates that make up the ``phase-space" spanned by these two local invariants. We prove the existence of a boundary consisting exclusively of the fractional powers of the \textsc{swap} operator. Of special interest are gates that maximize entangling power, in the sense that they attain bounds set by the dimensionality $N$. It is known that for two-qubit systems such a gate does not exist \cite{Higuchi2000}, while  the maximal entangling power is attained for {\sc cnot} and related gates \cite{Zanardi2000,JMZL2017}. 

\item If the first part is about the ``kinematics" of the entangling power and gate typicality, the second is a study of its ``dynamics", via
the entangling power of the products of unitaries. We generalize earlier results \cite{JMZL2017} for equal subsystem dimensionality to the important case when the two subsystems could be of different dimensions. In a central result in this context, we demonstrate the exponentially fast thermalization of the average entangling power of $\mathcal{U}^{(n)}=\Pi_{j=1}^n \mathcal{U}_j$ with time, to that of a typical unitary operator. Furthermore, we show that there are signatures of such a thermalization in the spectra of the opeators obtained by reshuffing and partial transposition (both permutations) of the time-evolution operator.
\end{enumerate}

Thus the second part of this work shows the thermalization of the entangling power of $U$ under time evolution with non-autonomous local evolutions. Such exponential saturation also seems to provide excellent approximations in the case of autonomous Floquet systems \cite{JMZL2017} whose dimensions are not very small, although the circumstances under which this holds needs further investigation. Thus we expect applications not only for coupled chaotic systems such as the kicked top and the kicked rotor, but also to many-body systems such as the kicked and tilted field Ising models \cite{Luitz2017,PalLak2018}.
The exponential approach of the average entangling power of $\mathcal{U}^{(n)}$ to the Haar average is determined solely by $e_p(U)$, the entangling power of the interaction. This demonstrates that any nonzero value of the entangling power, however small, is sufficient to thermalize its powers interspersed with random local operators.

Apart from the entangling power, the gate typicality \cite{JMZL2017} also has a simple exponential approach to the global RMT average,
depending solely upon the $g_t(U)$, the gate typicality of the interaction. In fact, this formed an important basis for the introduction of this quantity that is naturally singled out. In contrast, the thermalization of other local invariants, such as the operator entanglement, are sums of exponentials with different rates. The extremal values of the gate-typicality, $g_t=0$ and $g_t=1$, correspond to local gates and the \textsc{swap} operator, respectively, while the average value, $g_t=1/2$ (for equal subsystem dimensions), characterizes the Haar average over the entire set of bipartite unitary gates. Thus the entangling power and gate typicality are local invariants associated with the interaction $U$ that determine the complexity of products such as $\Pi_{j=1}^n \mathcal{U}_j$.

The paper is organized as follows. Section~(\ref{sec:localinv}) will introduce in detail all the relevant quantities, including operator entanglement and entangling power. Section~(\ref{sec:2qubits}) studies the allowed region of the invariants for the case of two qubits. We study this via the entangling power, gate typicality ($e_p, g_t$) phase space and establish the boundaries of the allowed gates. Section~(\ref{sec:qudits}) discusses some special gates such as the Fourier and the fractional powers of \textsc{swap} in arbitrary dimensions, and give partial results for qutrits as well as conjecture that the fractional powers of \textsc{swap} form a boundary for all quNits. Finally in Section~(\ref{sec:thermal}) we study time evolution and prove the thermalization of entangling power (and gate typicality) under certain conditions. Here, we generalize our earlier result obtained in  \cite{JMZL2017}, referring to Appendix~\ref{app:details} for an elegant proof. Finally, we provide examples wherein the thermalization can be seen via approach of the partial transposed and reshuffled operators to the Girko circular law \cite{Girko1985}
and their squared singular values to the Marcenko-Pastur distribution \cite{MPlaw}. 
Section~(\ref{sec:summary}) provides a summary and outlook.

\section{Local Invariants of operators:  Entangling power and gate typicality}
\label{sec:localinv}

\subsection{Two sets of local unitary invariants and operator entanglement}

Consider a unitary operator $U$ acting on the bipartite space ${\cal H}_N^A \otimes
{\cal H}_N^B$ of two parts labeled $A$ and $B$. For simplicity we restrict attention to
spaces whose dimensions are equal (and to $N$). The generalization to unequal dimensions is treated
in Appendix \ref{app:4party}. Operators such as $U$ may be ``gates" in the language of
quantum circuits, or just quantum propagators describing evolution over some finite time.
The fact that $U$ need not be of a product form $u_A\otimes u_B$, with $u_{A,B}$ acting
on ${\cal H}_N^{A,B}$ in general implies that it is usually capable of creating
entanglement when it acts on unentangled states. Let the operator Schmidt decomposition
of $U$ be \cite{Zyczkowski2004}
\beq
\label{eq:SchmidtU}
U= \sum_{j=1}^{N^2} \sqrt{\lambda_j} \, M_{A_j} \otimes M_{B_j} ,
\eeq
where the operators on the individual spaces $M_{A_j}$ and $M_{B_j}$
are in general not unitary themselves, but form an orthonormal basis for operators on
their respective spaces,
$\tr(M_{A_j}^{\dagger} M_{A_k})=  \tr(M_{B_j}^{\dagger} M_{B_k})=\delta_{jk}$,
 where $\delta_{jk}$ is the Kronecker delta.
 The  Schmidt vector $\lambda=\{ \lambda_{j}\}_{j=1}^{N^2}$ 
 is determined by singular values of the reshuffled matrix $U^R$ -- see 
 Appendix  \ref{app:4party}.
Note that $\lambda$ is invariant under local unitary operations.  
  Unitarity of $U$ implies that 
\beq
\frac{1}{N^2}\sum_{j=1}^{N^2} \lambda_j = 1,
\eeq
so the rescaled vector of Schmidt ceofficients, $\{\lambda_{i}/N^2\}$,
can be treated as a discrete probability measure that characterizes the nonlocality of the
operator $U$.
To elaborate let 
\beq 
U \rightarrow U'= (u_{A_1} \otimes u_{B_1}) \, U  \,(u_{A_2} \otimes u_{B_2}) ,
\eeq
where $u_{A_j,B_k}$ are ``local" unitary operators. In the language of dynamics, they
constitute single-particle evolutions. The content of nonlocality of $U$ and $U'$ is
identical and hence the measures characterizing their nonlocality must be the same. In
the case of states, this constitutes the condition that all entanglement measures be
local unitary invariants. It is clear from the definition of the operator Schmidt
decomposition that the set $\{ \lambda_i\}$ are $N^2$ such invariants, as $M_{A_j}
\rightarrow u_{A_1} M_{A_j} u_{A_2}$ also constitute an operator basis consisting of
orthonormal operators, and similarly for $M_{B_j}$.

Another set of $N^2$ invariants are constructed from the operator Schmidt decomposition of
the operator product $U S$ where $S$ is the
\textsc{swap} (or flip) operator defined as
\beq
S|\phi_A\kt |\phi_B \kt = |\phi_B\kt |\phi_A\kt, \, \text{or}\, S(u_A\otimes
u_B)S=u_B\otimes u_A,
\eeq
for arbitrary states $|\phi_{A,B}\kt $ and operators $u_{A,B}$. Let 
\beq
\label{eq:SchmidtUS}
US= \sum_{j=1}^{N^2} \sqrt{\mu_j} \, \tilde{M}_{A_j} \otimes \tilde{M}_{B_j},
\eeq
be its Schmidt decomposition. As $S$ is unitary we also have that 
\beq
\frac{1}{N^2}\sum_{j=1}^N \mu_j=1.
\eeq
That the set $\{\mu_i\}$ constitute $N^2$ invariants follows from the observation that 
\beq
\begin{split}
U'S&= (u_{A_1} \otimes u_{B_1}) \, U  \, (u_{A_2} \otimes u_{B_2})S \\&=
(u_{A_1} \otimes u_{B_1}) \, U S \, (u_{B_2} \otimes u_{A_2}),
\end{split}
\eeq
and hence the Schmidt eigenvalues of $US$, the $\mu_i$, are the same
as the Schmidt eigenvalues of $U'S$. The product $SU$ does not produce any newer
invariants. 

This paper is focused on
 these two sets of invariants and quantities derived from them. 
In particular, their moments and entropies provide measures of how nonlocal the operator
$U$ is, leading to a class of \emph{operator entanglement entropies}. Here, we will be concerned with the entropies related to the second moments, given by,
\beq
E(U)=1-\frac{1}{N^4}\sum_{j=1}^{N^2} \lambda_j^2, \;\;\text{and}\;
E(US)=1-\frac{1}{N^4}\sum_{j=1}^{N^2} \mu_j^2.
\eeq
$E(U)$ and $E(US)$ are the \emph{linear} operator entanglement entropies of the operators $U$ and $US$ respectively. They take values in $[0, 1-1/N^2]$, and
$E(U)=0$ iff $U$ is a local product operator.

\subsection{Entangling power and a complementary quantity}

Notice that $E(U)$ and $E(US)$ are in some sense complementary quantities, as for a
product operator,
\beq
E(u_{A} \otimes u_{B})=0,\; \text{while}\; E((u_{A} \otimes u_{B})
S)=E(S)=1-\frac{1}{N^2}.
\eeq
The last relation follows from the Schmidt decomposition of $S$ which 
is 
\beq
S= \sum_{i,k=1}^{N} e_{ik} \otimes e_{ki}, \; \text{where}\; e_{ik}=|i\kt \br k|.
\eeq
Here $\{|i\kt,\, 1\le i \le N\}$ denotes
any orthogonal basis and hence represents a continuous family of possible Schmidt
decompositions, each with $\lambda_j=1$ for $1\le j\le N^2$.
The \textsc{swap} operator has the maximum operator entanglement entropy according to any measure
of entropy, including the linear one, $E(S)$ as above. Thus if $E(U)=E(S)$ the operator is $U$ maximally entangled. The complementary quantity $E(US)$ vanishes in the case of the \textsc{swap} gate, that is, $E(US)=E(S^2)=0$. 
In fact, linear combinations of these two complementary quantities give rise to two measures that are extensively discussed in this paper. 

One of the two measures we look at is the well-studied ``entangling power". The entangling power $e_p(U)$ \cite{Zanardi2000,Zanardi2001} of an operator
 $U\in \mathcal{H}^N_A\otimes\mathcal{H}^N_B$ 
is defined as the average entanglement created when $U$ acts on product state
$|\psi_A\kt|\psi_B\kt$ sampled according to the Haar measure on the individual spaces:
\beq
\label{eq:ep_symm}
e_p(U) = \left(\frac{N+1}{N-1} \right)\,
\overline{\mathcal{E}(U|\psi_A\kt|\psi_B\kt)}^{\psi_A,\psi_B}.
\eeq
Here the entanglement measure is the linear entropy $\mathcal{E}(|\psi\kt) = 1 -
\tr_A(\rho^2_A)$ and $\rho_A$ is the reduced density matrix $\tr_B(|\psi\kt\br\psi|)$. It
has been shown in \cite{Zanardi2001} that for any gate $U$ its  entangling power
can be expressed by the linear operator entanglement entropy,
\beq
e_p(U)=\frac{1}{E(S)}\left[E(U) + E(US) - E(S)\right]. 
\label{eqn:ep}
\eeq
The range of $e_p(U)$ is,
\beq\label{eq:eplimits}
0\leq e_p(U) \leq 1,
\eeq
which follows from the fact that the maximum value of $E(U)$ is $E(S)$. 
We have rescaled the definition of $e_p(U)$ from that originally defined in \cite{Zanardi2000},
 so that the maximum value is simply $1$ independent of $N$.

If $e_p(U)=0$ then $U$ is either a product of local operators or locally equivalent to the \textsc{swap}. The fact that \textsc{swap} does not create
any entanglement when acting on product states leads to $e_p(S)=0$, but that it is highly
nonlocal is reflected in its operator entanglement being maximum. This is one motivation
for introducing the complementary quantity
\beq\label{eqn:gt}
g_t(U) := \frac{1}{2E(S)}\left[E(U) - E(US) + E(S) \right]
\eeq
where $g_t$ is referred to as \textsl{gate typicality} in \cite{JMZL2017}. 
The range of $g_t(U)$ is 
\beq
0 \le g_t(U) \le 1,
\eeq
and $g_t(U)=1$ iff $U$ is the \textsc{swap} or is locally equivalent to the \textsc{swap}. Again, we have rescaled $g_t$ from the original definition in \cite{JMZL2017} by a factor of $2$ for
complete parity with $e_p$.

Thus while $e_p$ does not distinguish the local operators from the \textsc{swap}, $g_t$ does. It
turns out that rather than discussing the pair $\{E(U), E(US)\}$ in several settings
it seems more natural to work in the plane 
$\{e_p(U), g_t(U)\}$. 
The average of these measures when $U$ is sampled uniformly from the space of unitary
matrices with respect to the Haar measure 
constitutes the average over the circular unitary ensemble (CUE)
 and reads
\beq
\label{eq:CUEavgs}
\overline{E}=\overline{e_p}=\frac{N^2-1}{N^2+1}=\frac{E(S)}{2-E(S)}, \;\;
\overline{g_t}=\frac{1}{2}.
\eeq
As the scale is set so that $g_t\in [0,1]$,
and the Haar average reads $1/2$
we see that both classes of local gates and gates locally equivalent to 
 \textsc{swap} are equally non-typical.
The fact that $\overline{E}$ and $\overline{e_p}$ are close to the maximal possible value indicates that a typical Haar unitary gate 
has strong entangling properties \cite{Kus2013},
in analogy to the known fact that a generic bipartite pure state
is strongly entangled \cite{Lubkin1993,ZS01}.

Computation of the operator Schmidt decomposition and the operator entanglements follows from suitable permutations of the unitary matrix. If $\br i \alpha|U|j \beta \kt =\br i j |U^R|\alpha \beta\kt$ and 
$\br i \alpha|U|j \beta \kt =\br j \alpha |U^{T_A}|i \beta\kt$ denote the reshuffling (also referred to as realignment)
 and the partial
transpose operations respectively, we may define the following density matrices  \cite{Zyczkowski2004}:
\beq
\label{eq:rhoRandTdefn}
\rho_R(U)=\frac{1}{N^2} U^R U^{R \dagger}, \;\; \rho_T(U)=\frac{1}{N^2} U^{T_A} U^{T_A \dagger}.
\eeq  
Their linear entropies are given by
\beq
\label{Eq:EUUR}
E(U)=1-\frac{1}{N^4} \tr \left( U^R U^{R \,\dagger}\right)^2, 
\eeq
and 
\beq
\label{Eq:EUSUT}
E(US)=1-\frac{1}{N^4} \tr \left( U^{T_A} U^{T_A \,\dagger}\right)^2. 
\eeq
The operational interpretations of these quantities in terms of state entanglement of the equivalent $4$-party system is 
elaborated in Appendix \ref{app:4party}, including the generalization to the case of unequal subsystem dimensionality.

Note that if $U^R$ is also unitary then $E(U)=E(S)$ is the maximum possible. 
Unitary operators whose reshuffling is also unitary have recently been called {\sl dual-unitaries} due to their appearance in lattice models with space-time duality \cite{Akila2016,Bertini2019,Bertini2019b}. This class contains, for instance, the  discrete Fourier transform $F_{N^2}$,
for which all coefficients in the operator Schmidt decomposition are equal \cite{Nielsen2003,Kus2013}.
This dual-unitary property allows for special many-body systems built out of such unitaries to be solvable in some sense \cite{BKP2019,GBAWG19,PBCP20}, although they can be non-integrable. It is indeed interesting that such unitaries are also maximally entangled in the operator entanglement sense.
A way to generate ensembles of dual-unitaries has been presented in \cite{Suhail2020}.

If $U^{T_A}$ is also unitary, apart from $U^R$, then $E(US)=E(S)$.
 Such a matrix $U$,   called ``$2$-unitary" \cite{Goyeneche2015},
 saturates the maximum of entangling power, set to unity  by  our normalization.
As discussed in Appendix~\ref{app:4party} 
any two-unitary matrix of order $N^2$  corresponds to 
a two-uniform state   state of four quNits,
maximally entangled with respect to three possible symmetric partitions of the system \cite{HCLRL12}.
Any  dual-unitary, which satisfies weaker constraints, represents a 4-party state maximally entangled with respect to two possible partitions out of three.
Any unitary matrix of size  $N^3$, which remains unitary for any possible choice of three indices
out of six is called  three-unitary. It  maximizes the tri-partite entangling power \cite{Linowski_2020},
and  corresponds to a three-uniform state of six parties, maximally entangled
with respect to any splitting of the system into three plus three parts.
In general, a $k$-unitary matrix of size $N^k$ 
represents a $k$-uniform pure state of $2k$ subsystems  \cite{Goyeneche2015},
maximally entangled with respect to any symmetric partition of the system, 
 and therefore called  {\sl absolutely maximally entangled} (AME) state.

\section{Boundaries of two-qubit gates}
\label{sec:2qubits}

We focus on the two simple cases of  two-qubit and two-qutrit unitary gates.
 In particular, for $N=2$ and $N=3$ we study the structure of the set $K_N$
 of unitary matrices, $U(N^2)$, projected onto the plane $\{e_p(U), g_t(U)\}$.  
 Due to the normalization used the   phase-space is restricted to the square $[0,1]^2$.
 We will be interested in describing the boundary of the
 allowed area within the square
 and identifying particular gates corresponding to
 the distinguished points of the boundary.

The gate typicality $g_t$ and entangling power $e_p$  
 for two-qubit unitaries $U$, drawn at random from CUE$(4)$, are shown in Fig.~\ref{fig:ep_vs_gt}. It is clear that $0\leq e_p\leq 2/3$, reflecting the well-known fact that the maximum possible value of entangling power for a two-qubit gate is not $1$ (with our choice of factors), but is only $2/3$~\cite{Zanardi2000}. This is related to the nonexistence of absolutely maximally entangled states for a $4$-qubit system \cite{Higuchi2000}, as already mentioned above, and explained in Appendix~\ref{app:4party}.

Gate typicality is symmetric about its mean value $\overline{g_t(U)}^U = 1/2$ and 
this is reflected by the following equality,
    \beq \label{eq:gtrelation}
    g_t(U)+g_t(US) = 1.
    \eeq
Its maximal value  $g_t=1$ is attained only by the \textsc{swap} gate 
and its local equivalents, while the minimal value 
$g_t=0$ corresponds to local operators. Therefore, it might be appropriate to call the operators with $1/2\leq g_t\leq 1$,  \textsl{\textsc{swap}}-like. 
    
The boundaries of the set $K_2$ shown in Fig.~\ref{fig:ep_vs_gt} can be found using the limits of operator
entanglement $E(U)$ and $E(US)$. Writing these quantities in terms of the
entangling power $e_p$ and gate typicality $g_t$ of a two-qubit operator, $N=2$,
leads to
\beq
\begin{split}
E(U) &= \frac{3}{8}\left[e_p(U) + 2g_t(U)\right] \\
E(US) &= \frac{3}{8}\left[e_p(U) - 2g_t(U) + 2\right].
\end{split}
\eeq
  \begin{figure}
\includegraphics[width=0.5\textwidth]{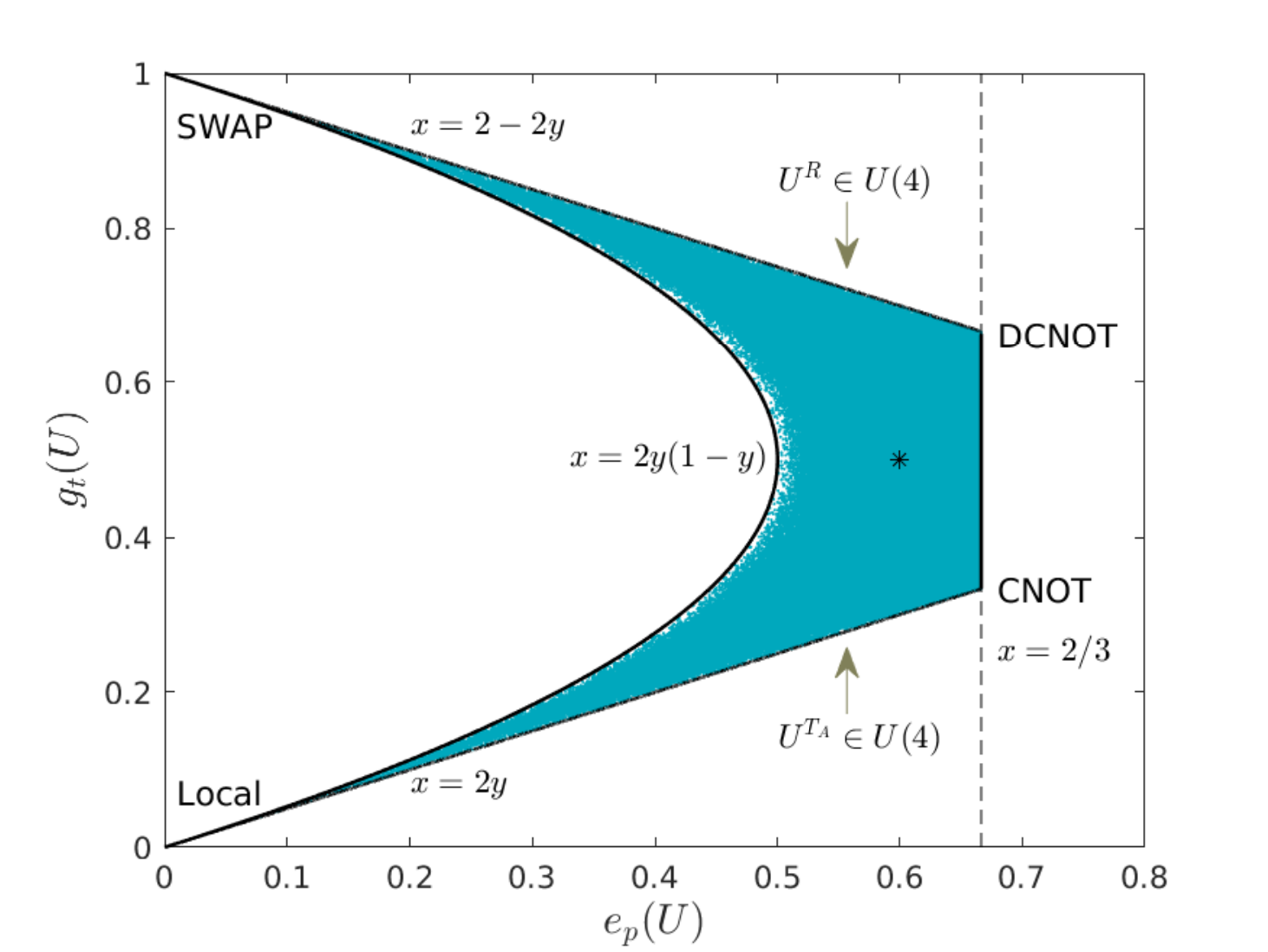}    \caption{
Set $K_2$ of all two-qubit gates projected into the plane  
entangling power, $x=e_p(U)$, 
versus gate-typicality, $y=g_t(U)$.
Each point corresponds to a random unitary matrix from $ U(4)$, 
while (*) represents the average over CUE.
The boundaries  $\partial K_2$ are identified in the text and in Fig.  \ref{fig:ep_gt_tedges}.
}
    \label{fig:ep_vs_gt}
    \end{figure}

The upper bounds on $E(U)$ and $E(US)$ (equal to $3/4$) lead to the relations,
 \beq
 \label{eq:linesepgt}
e_p+2g_t\leq 2 \quad \text{and} \quad g_t\geq e_p/2,
    \eeq
which are the top and bottom lines in Fig.~\ref{fig:ep_vs_gt}. The maximum value of $e_p=2/3$ is reached by the \textsc{cnot} gate and is an ``optimal" gate in the
terminology of \cite{Zanardi2000}. The region is further restricted however and we will show below that the left boundary is given by the parabola $e_p=2g_t(1-g_t)$.
We further show  in Sec.~\ref{sec:qudits} that this boundary in fact consists of gates 
of the form $S^{\alpha}$ with  $0\le \alpha \le 1$,
that are rational powers of the \textsc{swap} operator $S$.

\subsection*{The Weyl chamber and various gates}
     
%

While the lines in Eq.~(\ref{eq:linesepgt}) are bounds, we identify the gates that make
these actual boundaries of the allowed  set $K_2$ in the space $e_{p}$ vs $g_{t}$. 
It will be useful to work with the well known canonical form of a two-qubit unitary operator. Any
two-qubit operator $U\in SU(4)$\footnote{The statement extends to any $U \in U(4)$, since any bipartite unitary $U \in U(4)$ can be expressed as the product of a $U \in SU(4)$ and a global phase shift $e^{i\alpha}$.}, upto left and right multiplication by local unitaries,
can be expressed in terms of Euler angles $\{c_{1}, c_{2}, c_{3}\} \in [0,\pi]$
as~\cite{KBG01,KC01,Zhang2003, Rezakhani2004},
\begin{equation}
\label{can2x2}
\begin{split}
U =\exp \left[-i \left(\frac{c_{1}}{2}\sigma_{1}\otimes\sigma_{1} + \frac{c_{2}}{2}
\sigma_{2}\otimes\sigma_{2} + \frac{c_{3}}{2}\sigma_{3}\otimes\sigma_{3}\right)\right],
\end{split}
\end{equation}
where $\{\sigma_{1}, \sigma_{2}, \sigma_{3}\}$ are the Pauli matrices. In the standard
computational basis (the eigenbasis of $\sigma_{3}$), any bipartite unitary operator can
thus be written as,
\begin{equation}
    U =     \left( \begin{array}{cccc}
    e^{\frac{-ic_3}{2}}c^-       & 0 & 0  & -ie^{\frac{-ic_3}{2}}s^- \\
    0       & e^{\frac{ic_3}{2}}c^+ & -ie^{\frac{ic_3}{2}}s^+ & 0 \\
    0       & -ie^{\frac{ic_3}{2}}s^+ & e^{\frac{ic_3}{2}}c^+ & 0 \\
    -ie^{\frac{-ic_3}{2}}s^-       & 0 & 0  & e^{\frac{-ic_3}{2}}c^-  
\end{array} \right) \nonumber
\end{equation}
where,
\beq
c^\pm = \cos[(c_1\pm c_2)/2];\qquad s^\pm = \sin[(c_1\pm c_2)/2].
\eeq
On imposing the constraint of local unitary equivalence, that is, if any two unitaries
$U$ and $U' = (u_{A_1} \otimes u_{B_1}) U (u_{A_2} \otimes u_{B_2})$ related by local
unitaries are represented by the same set of Euler angles, the range of values gets
restricted to $|c_{3}| < c_{2} < c_{1} < \pi/2$. This
region in the $\{c_{1},c_{2},c_{3}\}$ space containing the nonlocal two-qubit gates forms a tetrahedron known as the Weyl chamber~\cite{Zhang2003}. 

In terms of the $\{c_{1}, c_{2}, c_{3}\}$ parametrization, it is known~\cite{Zhang2003,
Balakrishnan2009} that one can define two quantities which are invariant under local unitary
operations, namely,
\begin{align}
\begin{split}
G_1 &= \cos^2c_1\cos^2c_2\cos^2c_3 - \sin^2c_1\sin^2c_2\sin^2c_3 \\
& +  \frac{i}{4}\sin2 c_{1}\sin 2c_{2}\sin 2c_{3},\\
G_2 &= \cos2c_1 + \cos2c_2 + \cos2c_3.
\end{split}
\end{align}
The operator entanglements $E(U)$ and $E(US)$ can be written in terms of local invariants
$G_1$ and $G_2$, as follows~\cite{Balakrishnan2011}:
\begin{align}
\begin{split}
E(U) &= 1 - \frac{1}{8}\left[3 + 2|G_1(U)| + G_2(U)\right],\\
E(US) &= 1 - \frac{1}{8}\left[3 + 2|G_1(U)| - G_2(U)\right].
\end{split}
\end{align}
Consequently, the entangling power and gate-typicality of any two-qubit gate $U$ can be
explicitly evaluated in terms of the angles $\{c_{1}, c_{2}, c_{3}\}$ and takes on an
elegant and simple form as,
\begin{eqnarray}
e_p(U) &=& \frac{2}{3}\left[ \sin^{2}c_{1}\cos^{2}c_{2} + \sin^{2}c_{2}\cos^{2}c_{3} +
\sin^{2}c_{3}\cos^{2}c_{1} \right], \nonumber \\
g_{t}(U)&=& \frac{1}{3}\left[ \sin^{2}c_{1} + \sin^{2}c_{2} + \sin^{2}c_{3} \right].
\label{eq:ep_gt_formula}
\end{eqnarray}
This leads to the following restriction on the allowed region in the $e_{p}-g_{t}$ plane
for two-qubit gates.
\begin{theorem}[Boundary of two-qubit gates]
The entangling power $e_{p}(U)$ and gate-typicality $g_{t}(U)$ for any two-qubit unitary
$U$ satisfy
\begin{equation}
\label{eq:epgtparabola}
e_{p}(U) \geq 2g_{t}(U)\left( 1 - g_{t}(U)\right).
\end{equation}
\end{theorem}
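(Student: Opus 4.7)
The plan is to reduce the claim to an elementary symmetric-function inequality on three real variables in $[0,1]$. Introduce the substitution $x_i = \sin^2 c_i$ for $i=1,2,3$, so that $\cos^2 c_i = 1 - x_i$ and $x_i\in[0,1]$. Using \eqref{eq:ep_gt_formula}, I would first observe that
\begin{equation}
g_t(U) = \tfrac{1}{3}(x_1+x_2+x_3),
\end{equation}
and, after expanding and regrouping,
\begin{equation}
e_p(U) = \tfrac{2}{3}\bigl[(x_1+x_2+x_3) - (x_1x_2+x_2x_3+x_3x_1)\bigr] = 2 g_t(U) - \tfrac{2}{3}\,p_2,
\end{equation}
where $p_2 := x_1x_2+x_2x_3+x_3x_1$ is the second elementary symmetric polynomial.

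Once this is in place, rearranging the desired inequality \eqref{eq:epgtparabola} gives
\begin{equation}
e_p(U) - 2 g_t(U) + 2 g_t(U)^2 \;\geq\; 0,
\end{equation}
which, after substituting the expressions above and multiplying through by $9/2$, becomes exactly
\begin{equation}
(x_1+x_2+x_3)^2 \;\geq\; 3\,(x_1x_2+x_2x_3+x_3x_1).
\end{equation}
This is the classical inequality $p_1^2 \geq 3 p_2$, equivalent to $\sum_i x_i^2 \geq \sum_{i<j} x_ix_j$, which follows immediately from $\tfrac12\sum_{i<j}(x_i-x_j)^2 \geq 0$. Equality holds if and only if $x_1=x_2=x_3$, i.e.\ $\sin^2 c_1 = \sin^2 c_2 = \sin^2 c_3$, which will later be useful to identify the parabolic boundary with the fractional powers of \textsc{swap}.

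The argument is essentially a direct calculation, so there is no genuine obstacle; the only care needed is in the algebraic simplification of $e_p$ in terms of $p_1$ and $p_2$, and in recognizing that the relevant bound is nothing but the discriminant-like inequality $p_1^2 \geq 3 p_2$ for three real numbers. Note that the argument nowhere uses $x_i\leq 1$, only $x_i\in\mathbb{R}$, so the parabolic boundary is intrinsic to the quadratic structure of \eqref{eq:ep_gt_formula} rather than to the range of the Euler angles.
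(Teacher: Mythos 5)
Your proposal is correct and follows essentially the same route as the paper: both substitute $x_i=\sin^2 c_i$ into Eq.~\eqref{eq:ep_gt_formula} and reduce the claim to the elementary inequality $x_1x_2+x_2x_3+x_3x_1\le x_1^2+x_2^2+x_3^2$ (equivalently $p_1^2\ge 3p_2$), which the paper invokes directly after expanding $2g_t(1-g_t)$. Your packaging via elementary symmetric polynomials, and the observation that equality forces $x_1=x_2=x_3$, is a clean restatement of the same argument.
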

{\it Proof:} Using Eq.~\eqref{eq:ep_gt_formula}, we see that $2 g_{t}(U) (1-g_{t}(U))$ is of
the form,
\begin{eqnarray}
 \label{eq:parabola1}
 \! \! \! \! \!\! \!\!
 2g_{t}(U)(1-g_{t}(U))=\frac{2}{9}( x+ y + z) (3 - (x + y + z)),
\end{eqnarray}
where $x \equiv \sin^{2}c_{1}$, $y \equiv \sin^{2}c_{2}$, $z \equiv \sin^{2}c_{3}$
satisfy $0 \leq x,y,z \leq 1$. Then, it is easy to see that,
\begin{eqnarray}
&& ( x+ y + z) (3 - (x + y + z)) \nonumber \\
&=& 3\left[x(1-y) +y(1-z) + z(1-x)\right] \nonumber \\
&+& \left(xy + yz + zx \right) - \left(x^{2} + y^{2} + z^{2}\right) \nonumber \\
&\leq& 3\left[x(1-y) +y(1-z) + z(1-x)\right], \nonumber
\end{eqnarray}
since $xy + yz + zx \leq x^{2} + y^{2} + z^{2}$, by Schwarz inequality. Using this in
Eq.~\eqref{eq:parabola1} above, we get,
\begin{eqnarray}
&& 2g_{t}(U)(1-g_{t}(U)) \nonumber \\
&\leq&  \frac{2}{3} \left[x(1-y) +y(1-z) + z(1-x)\right] \nonumber \\
&=& \frac{2}{3}\left[ \sin^{2}c_{1}\cos^{2}c_{2} + \sin^{2}c_{2}\cos^{2}c_{3} +
\sin^{2}c_{3}\cos^{2}c_{1} \right] \nonumber \\
&=& e_{p}(U),
\end{eqnarray}
as desired. \qedsymbol

\begin{table}[h!]
  \centering
  \begin{threeparttable}
    \begin{tabular}{ l  p{1.5cm}  p{1.5cm} p{1.5cm} p{1.5cm}}
    \hline\hline
    Gate $U$				& $E(U)$ 		& $E(US)$			& $e_p(U)$ 			& $g_t(U)$
    
    \\
\hline
    Local-gate 			& 0 				& $\frac{3}{4}$ 	&0 					&0\\[2ex]
$\sqrt{\textsc{cnot}}$ & $\frac{1}{4}$ & $\frac{3}{4}$ &$\frac{1}{3}$
&$\frac{1}{6}$\\[2ex]
\textsc{cnot, B-gate} & $\frac{1}{2}$ & $\frac{3}{4}$ &$\frac{2}{3}$
&$\frac{1}{3}$\\[3ex]
\textsc{dcnot} & $\frac{3}{4}$ & $\frac{1}{2}$ &$\frac{2}{3}$ &$\frac{2}{3}$\\[2ex]
Fourier  $F_4$			& $\frac{3}{4}$ 		& $\frac{1}{4}$ 		&$\frac{1}{3}$ 		&$\frac{5}{6}$\\[3ex]$\sqrt{\textsc{swap}}$ & $\frac{9}{16}$ & $\frac{9}{16}$ &$\frac{1}{2}$
&$\frac{1}{2}$\\[2ex]
       \textsc{swap} 		& $\frac{3}{4}$ 		& $0$ 			&$0$ 				&$1$
       
        \\
\hline
\\
Haar Average &$\frac{3}{5}$ &$\frac{3}{5}$			&$\frac{3}{5}$ 			&$\frac{1}{2}$

\\
\hline\hline
    \end{tabular}
  \end{threeparttable}
    \caption{Nonlocal properties of selected two-qubit gates, $N=2$. 
    Their location in the set $K_2$ is shown in Fig. \ref{fig:ep_gt_tedges}.}
    \label{table:5.2}
\end{table}

\begin{figure}[!tbp]
 \centering
\includegraphics[width=0.5\textwidth]{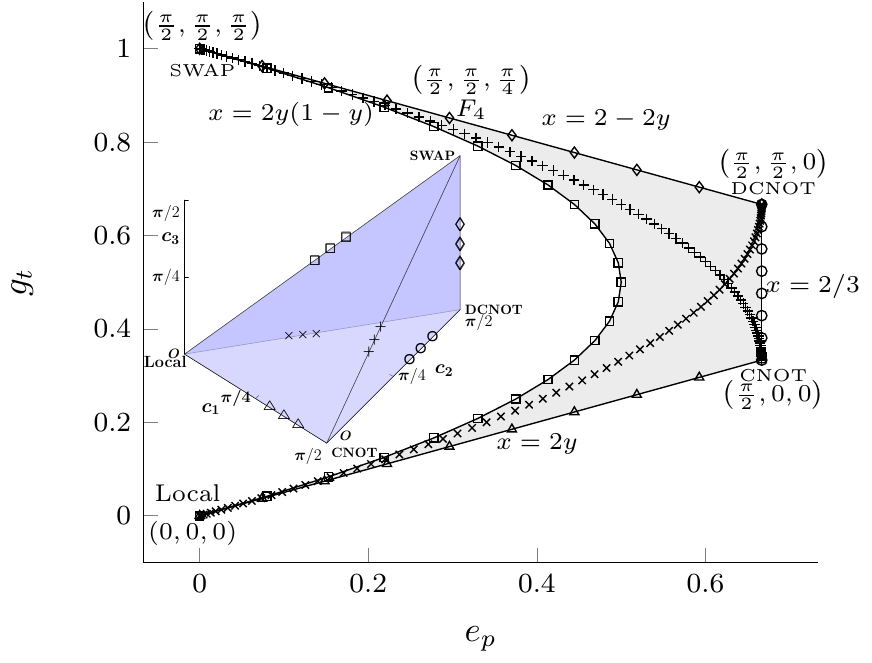}
    \caption{Boundaries of the set $K_2$ representing two-qubit gates 
    in the plane $(e_p, g_t)$
are indicated by solid lines. Note distinguished gates  identified in the plot. 
Inset shows edges of the
tetrahedron  in the parameter space $(c_1,c_2,c_3)$  forming a half of the Weyl chamber \cite{Mandarino2018},
   which correspond to  $\partial K_2$.
   }
    \label{fig:ep_gt_tedges}
  \end{figure}

The inequality in Eq.~(\ref{eq:epgtparabola}) is tight, as the $S^\alpha$ family of gates
with $0\leq\alpha\leq1$ lie on the parabola $e_{p}(U) = 2g_{t}(U)\left( 1 - g_{t}(U)\right)$. This is shown by an explicit calculation in Eq.~(\ref{eq:swapparab}), Sec.~\ref{sec:qudits}.

The \textsc{cnot} gate $C$ has the maximum entangling power of $2/3$, as expected. Furthermore, we show below that all
members of the $C S^\alpha$ family, with $0 \leq \alpha \leq 1$ have maximum entangling
power of $2/3$ and form the rightmost vertical boundary in
\Cref{fig:ep_vs_gt,fig:ep_gt_tedges}. The gate $CS$ is the so-called double-\textsc{cnot} ({\sc dcnot})
gate \cite{Collins2001}. Note that, 
\beq
U_t=\exp( i t S)= \mathbb{1}\, \cos t + i \sin t \, S, \label{eq:ut}
\eeq
as $S^2=\mathbb{1}$, where, $\mathbb{1}$ denotes the identity operator. This is a route to defining fractional powers of $S$, as
$\exp(i \pi S/2) = iS$ and therefore
$(iS)^{t 2 /\pi}$ is same as $\exp(i t S)$ and the overall
phase of $i^{t 2 /\pi}$ makes no difference to any of the subsequent 
calculations. Therefore $U_t$ is essentially $S^{2t/\pi}$.  
The reshuffled matrix of $CS^\alpha$ is upto a constant phase given by,
\beq
\label{eq:CSalphaR}
(CS^\alpha)^R = \cos(\pi \alpha/2) C^R+ i \sin (\pi \alpha/2) (CS)^R. 
\eeq
The rearrangement
of the \textsc{cnot} gate is non-unitary being $|00\kt \br 00|+|00\kt \br 11|+|11\kt \br
01| +|11\kt \br 10|$, while $(CS)^R$ is again a permutation given by $|00\kt \br
00|+|10\kt \br 11|+|01\kt \br 10|+|11\kt \br 01|$. A calculation then yields that
\beq
E(CS^{\alpha})=\frac{1}{8}(5-\cos(\pi \alpha))
\eeq
Hence 
\beq
e_p(CS^{\alpha})=\frac{1}{E(S)}\left[ E(CS^{\alpha})+E(CS^{\alpha+1})-E(S)\right]=\frac{2}{3},
\eeq
and $g_t(CS^{\alpha})=1/2 -\cos(\pi \alpha)/6$ interpolating between $1/3$ and $2/3$.

Several other standard two qubit gates are identified and their operator entanglement and
entangling powers are given in Table~\ref{table:5.2}.
We also identify gates in the Weyl chamber with different regions of the set $K_2$
contained in the plane  $(e_p, g_t)$.
In  \Cref{fig:ep_gt_tedges}, six edges of the tetrahedron
forming a half of the chamber \cite{Mandarino2018}
are shown. Four of these edges form four of the boundaries $\partial K_2$,
the other two connect two of the extreme points symmetrically.


\section{Beyond qubits and the entangling power of some quNit gates}  
\label{sec:qudits}
Moving beyond qubits, we now study the entanglement landscape of bipartite unitary gates
acting in a composite $N \times N$ quantum system.
 In this context, we investigate the Fourier gate and the fractional powers of the \textsc{swap} that form an important family of gates. We observe that for any $N \geq 2$, the fractional powers of \textsc{swap} lie on a parabola. The rightmost point is maximally entangling, at $e_p=1$ and $g_t=1/2$ and it is known that in all dimensions except $N=6$ (and $N=2$, which we have already dealt with) permutations exist which have
these values. In the case $N=3$ explicit examples of permutations which have $e_p=1$ have been constructed \cite{Clarisse2005,Goyeneche2015}.


The discrete Fourier transform, DFT, on the space
$\mathcal{H}_N \otimes \mathcal{H}_N$
is given by the unitary gate $F_{N^2}$ of order $N^2$,
 with entries  $F_{mn}=\frac{1}{N}\exp(2 \pi i mn/N^2)$. 
 This may be expressed in bipartite
notation, as, 
\beq
\br k \alpha |F_{N^2}|j \beta \kt = \frac{1}{N}e^{\frac{2 \pi i}{N^2}[(k+\alpha N)(j +\beta
N)]},
\eeq
where $0 \leq k,j, \alpha, \beta \leq N-1$. It is then straightforward to verify that the
reshuffled matrix $F_{N^2}^R$ is also unitary \cite{Kus2013}, and hence the operator
entanglement is maximum possible: $E(F_{N^2})=1-1/N^2$. In this sense the Fourier gate in
arbitrary dimensions is a dual-unitary, and a recent paper \cite{GBAWG19} constructs dual kicked chains using the DFT, to study solvable Floquet many-body systems.

However, the partial transpose of the DFT is not unitary and
hence the Fourier does not have maximal entangling power. Equivalently $E(F_{N^2}S)$ is not the maximal possible, instead a calculation
yields
\beq
\begin{split}
E(F_{N^2}S)=&1-\frac{1}{N^4}\left[N^3+2\sum_{k=1}^{N-1}k\frac{\sin^2\left(k\,\pi/N\right)}{\sin^2\left(\pi/N-k\,\pi/N^2\right)}\right]\\
& \approx 1-\frac{2}{\pi^2} \int_0^{1} \frac{x \sin^2(\pi x)}{(1-x)^2}\, dx \approx
0.344,
\end{split}
\eeq
where the approximation is valid for large $N$. Thus the operator entanglement of $F_{N^2}S$
and the entangling power of the Fourier gate $F_{N^2}$
tends to $\approx 0.344$, about one-third of the maximum possible.

As indicated in Eq.~\eqref{eq:ut} above,  the fractional powers
of the \textsc{swap} $S$ up to phase factors are given by $U_t=\exp(i t S)$. Since the reshuffled operator $S^R=S$, we
get
\begin{eqnarray}
U_t^R &=& \mathbb{1}^R \cos t + i \sin t\,  S \nonumber \\
&=& N |\Phi^+ \kt \br \Phi^+| \cos t+ i \sin t \, S,
\end{eqnarray}
where we have use the fact that the reshuffling of the identity is given by, $\mathbb{1}^R= N |\Phi^+ \kt \br \Phi^+|$, with $|\Phi^+\kt =
\frac{1}{\sqrt{N}}\sum_{i=1}^N|ii \kt$ being a maximally entangled state. Further, as $U_t\, S= S\,  \cos t +i \mathbb{1} \sin t$, 
the following simple formulae follow for the fractional powers of the \textsc{swap} gate:
\beq
\label{eq:swapparab}
\begin{split}
&E(e^{i t S})=E(S) (1-\cos^4 t),\; E(e^{itS}S)=E(S)(1-\sin^4 t),\\
&e_p(e^{itS})=\frac{1}{2} \sin^2(2t), \; g_t(e^{itS})=\sin^2 t.
\end{split}
\eeq
Thus, if $U_t$ is a
fractional power of $S$ then  $e_{p}(U_t)=2 g_t(U_t) \left(1-g_t(U_t) \right)$, in any dimension. We have already shown that this parabola is indeed the left-boundary of the  set $K_2$ in the $(e_{p},g_{t})$ plane in the case of two-qubit gates.

%
%

To investigate the neighborhood of the parabola, we start with an operator of the form $S^{\alpha}$ and perturb it, while retaining the unitarity. There are many possible ways of doing such a perturbation, all of which yield equivalent results. For example one may deform $S^{\alpha} \rightarrow S^{\alpha}
\exp(i \epsilon H)$ where $H$ is a random Hermitian matrix with unit variance and zero mean elements. Another approach is to 
use random matrices
$ V_{\epsilon}= U_{CUE} U_d(\epsilon) U_{CUE}^{\dagger}$  from the ensemble investigated in \cite{PZK98}
and  defined by a Haar random unitary matrix  $U_{CUE}$ and a diagonal matrix $U_d(\epsilon)$
with  phases $\exp(i \epsilon \xi)$, where $\xi$ is uniform random number in $[-\pi, \pi )$.
Powers of  \textsc{swap} perturbed as $S^{\alpha} \rightarrow S^{\alpha} V_{\epsilon}$
result in values of $\{e_p, g_t\}$ lying to the right of the parabola. Combined with the stationarity derived in Appendix~(\ref{app:lemmaswap}), one may be tempted to conjecture that the parabola itself is a boundary. However, we have found an exception in a permutation in the qutrit case and can only conclude that typical perturbations of $S^{\alpha}$ result in a movement to the right of the parabola in the ($e_p$, $g_t$) plane.

\begin{figure}[!tbp]
\includegraphics[width=.53\textwidth]{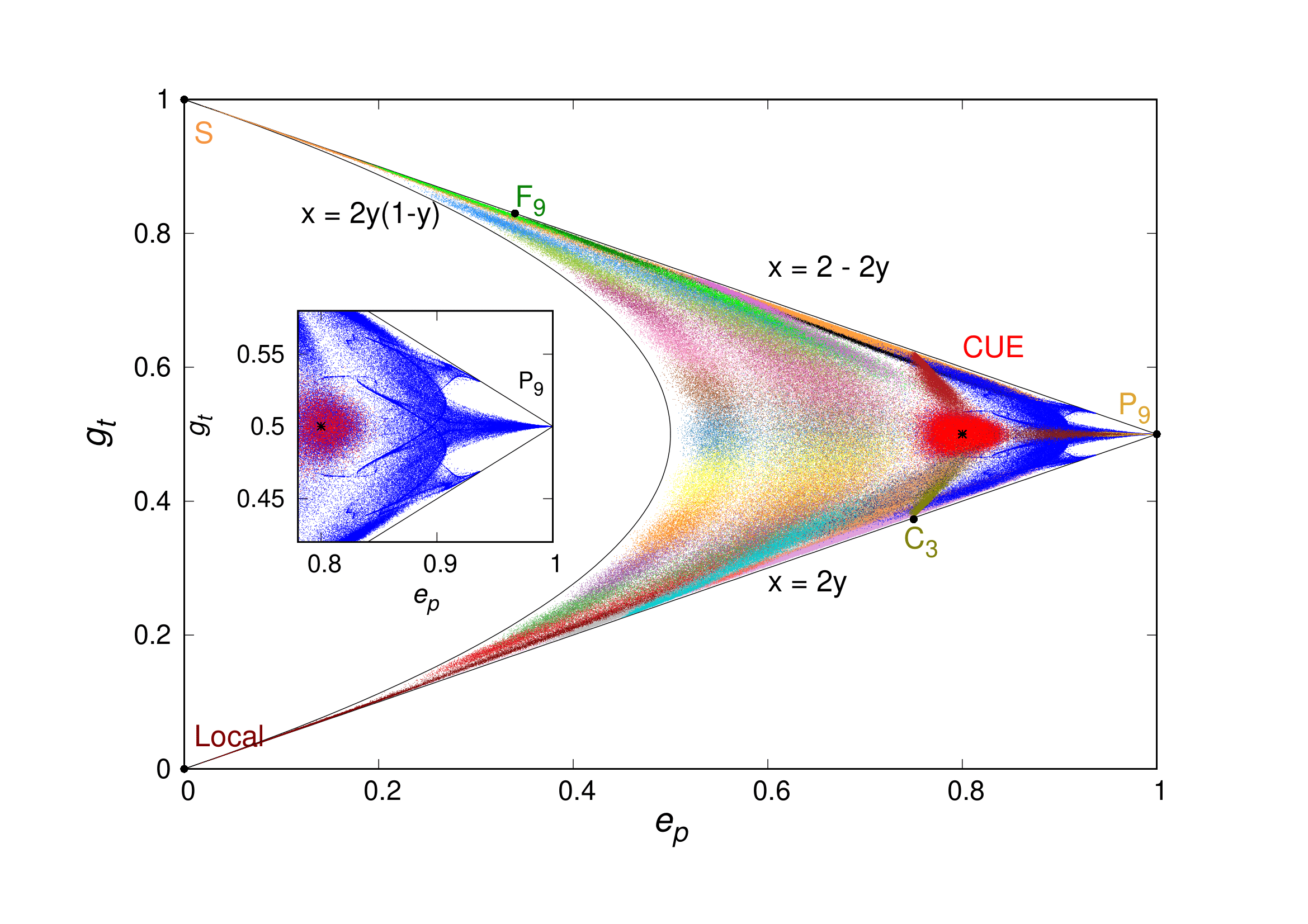} 
\caption{(color online:) Unitary matrices $U\in U(9)$, representing two-qutrit gates, projected into the set $K_3$
 in the  plane $(e_p, g_t)$. 
Each color represents the neighbourhood of a particular gate, labeled with the same color.
Upper side of the triangle, including Fourier matrix $F_9$, contains 
`dual unitaries' for which $U^R$ is unitary, while the lower side
includes these for which $U^{T_A}$ is unitary.
 Inset provides a magnified view of the region around the rightmost point representing the $2$-unitary gate $P_{9}$, 
  with the CUE `cloud',  centered at $(1/2,4/5)$  (*), and shown in red
and perturbations of different boundary gates 
 in blue. 
}
  \label{fig:epgt_qutrit}
  \end{figure}

A similar study as in the case of $K_{2}$ was performed for unitary matrices belonging to the lower and the upper parts of the boundary of the set $K_3$. It is useful to distinguish
certain unitary matrices, which correspond to points at  $\partial K_3$.
The controlled addition gate $C_N$ acting on a two-quNit system can be considered as a generalizations of the standard CNOT gate.
In the case of $N=3$ such a gate reads,
\beq
\   C_3 |i\kt\otimes |j\kt = |i\kt\otimes |i\oplus j\kt, \,\, i,j\in \mathbb{Z}_3,
\eeq
where $\oplus$ denotes addition modulo 3.
This gate attains the  maximal value of  $E(C_3\,S)$ and lies in $K_3$ on its lower boundary, $y=2x$. 
It is seen that the perturbations have the tendency to quickly approach the CUE ``cloud" in the manner of a jet.

In Fig.~\ref{fig:epgt_qutrit}, the neighbourhood gates of several unitary
quantum gates are generated for $N=3$ and the corresponding phase space plot is shown. The rightmost point 
of the set $K_3$ 
in the $(e_{p},g_{t})$ plane, denoted as $P_9$ in Fig.~\ref{fig:epgt_qutrit}, corresponds to one of the permutations with $e_p=1$ defined in \cite{Clarisse2005, Goyeneche2015}. The Fourier matrix $F_9$,  attains the maximum value of $E(U)$, as $F_9^R$ is unitary, 
and lies on the upper boundary of $K_3$ formed by the line $x=2(1-y)$. 

The upper boundary line contains  maximally entangled 
unitary matrices,  for which $U^R$ is also unitary.
However,  the partially transposed matrix $U^{T_a}$ is not unitary, with the exception of the matrices at the right corner of the triangle. Thus  gates belonging to the upper boundary of $K_N$ are not $2$-unitary \cite{Goyeneche2015},
but satisfy the weaker condition of being dual-unitaries \cite{Bertini2019b}. Unitary gates for which $U^{T_A}$ is unitary,  studied  in \cite{DNP16,BN17} in context of quantum operations preserving some given matrix algebra, belong to the lower boundary line of $K_3$. Both lines cross at the right corner of the triangle, representing permutation $P_9$ and other $2$-unitary matrices, which maximize the entangling power.


It is interesting to observe that the set $K_3$ seems not to fill entire edge of the triangle
close to the corner with $e_p=1$, as no dual unitaries
in the vicinity of $P_9$ were found. 
This fact is borne out by numerical simulations that employ an algorithm to create an ensemble of dual ones \cite{Suhail2020}. The significance of the gap observed is to be fully explored, but the numerics suggest that the set of dual unitary matrices of size $N^2=9$ is not connected, in contrast to the two-qubit case, $N^2=4$. Since the dual unitary operators are related to four-party entanglement  -- see Appendix~\ref{app:4party} --
this implies some additional constraints on the entanglement in four-qutrit systems across different partitions and on possible spectra of two-partite density matrices obtained by partial trace of a pure state of size $N^4$.

Analysis of the non-local properties of any two-qubit gate
becomes easier as the canonical form (\ref{can2x2})
is valid for any unitary matrix from $U(4)$.
This  form, related to a isomorphism in group theory between $SO(4)$
and $SU(2) \times SU(2)$ can not be generalized for two-qutrit gates.
Therefore, our understanding of the set of bipartite gates acting on $N \times N$
systems in still not complete. The structure of the set $K_N$
obtained by a projection of $U(N^2)$ into the plane $(e_p, g_t)$ 
is not entirely characterized even in the case $N=3$.
Leaving these open problems for further studies
we shall now move to a related problem,
if a given bipartite unitary gate $U$ acts sequentially on a quantum system.

\section{Time evolution and multiple uses of the nonlocal operators}
\label{sec:thermal}

If $U$ is a bipartite quantum propagator, it is natural to consider a combination
${\mathcal U}= (u_{A} \otimes u_{B}) U$ where the unitaries $u_{A},u_{B}$ are interpreted as ``local dynamics" or single particle dynamics. 
We have motivated (see discussion around Eq.~(\ref{eq:Upower(n)})) the study of its powers $\mathcal{U}^n$ as well as products $\mathcal{U}^{(n)}$ with different local operators in each term of the product.

The circuit in Fig~\ref{fig:circuit} describes the time-evolution scenario considered here, for the case of qubit systems. Specifically, the circuit depicts the propagator $\mathcal{U}^{(n)}$ for $n=3$. The fixed nonlocal unitary $U \in U(4)$ is implemented via a combination of $\textsc{cnot}$ gates and local rotations $R_{z}(t)$ and $R_{y}(t)$, following the prescription in~\cite{vatan}. The interlacing local qubit gates are denoted as $A_{i}$ and $A_{i}'$, with $i=1,2$. We have omitted the initial set of local unitaries since they do not affect the entangling power. Note that the interlacing locals are different at each step, and hence labelled differently.

Observe that for a single time step the nonlocal content of $U$ is the same as that of ${\mathcal U}$,  hence $e_p(U)=e_p({\mathcal U})$. 
Thus if the gate ${\mathcal U}$  is applied onto an unentangled initial state
the local dynamics does not play any role in creation of quantum entanglement.
However, the nonlocal content of multiple applications, either as ${\mathcal U}^n$ or $\mathcal{U}^{(n)}$, which represents discrete time evolution, is a different matter as the Schmidt coefficients of an operator
in general change on taking powers.   In this case the local dynamics can play a crucial
role \cite{JMZL2017, Mandarino2018}. For instance, in terms of entangling power
\beq
e_p({\mathcal U}^2)=e_p \left[ U (u_{A} \otimes u_{B}) U \right] \ne e_p(U^2).
\eeq
One of the aims of this paper is to analyze this difference
and study the regime of large $n$. While we have presented
related results earlier \cite{JMZL2017}, this work
contains an important generalization and a more elegant derivation that uses group
theory. 
Note that we are interested in generic statements about average entanglement growth
in time, a subject that already has a considerable literature and is still a topic of research.

\onecolumngrid

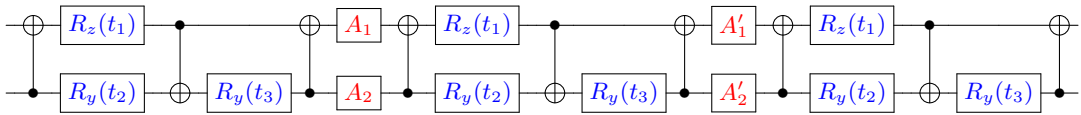
\begin{figure} 
\hspace{0.6cm}
\Qcircuit @C=0.35em @R=1.2em { 
 & \qw &\targ &\qw & \gate{\textcolor{blue}{R_{z}(t_{1})}} & \qw & \qw & \ctrl{1} & \qw & \qw & \targ  &\qw & \gate{\textcolor{red}{A_{1}}} & \qw &\targ &\qw & \gate{\textcolor{blue}{R_{z}(t_{1})}} & \qw & \qw & \ctrl{1} & \qw & \qw & \targ  & \qw & \gate{\textcolor{red}{A'_{1}}} & \qw &\targ &\qw & \gate{\textcolor{blue}{R_{z}(t_{1})}} & \qw & \qw & \ctrl{1} & \qw & \qw & \targ & \qw \\
 & \qw &\ctrl{-1} & \qw & \gate{\textcolor{blue}{R_{y}(t_{2})}} & \qw & \qw & \targ &\qw & \gate{\textcolor{blue}{R_{y}(t_{3})}} & \ctrl{-1} & \qw & \gate{\textcolor{red}{A_{2}}} & \qw &\ctrl{-1} & \qw & \gate{\textcolor{blue}{R_{y}(t_{2})}} & \qw & \qw & \targ &\qw & \gate{\textcolor{blue}{R_{y}(t_{3})}} & \ctrl{-1}   & \qw  & \gate{\textcolor{red}{A'_{2}}} & \qw &\ctrl{-1} & \qw & \gate{\textcolor{blue}{R_{y}(t_{2})}} & \qw & \qw & \targ &\qw & \gate{\textcolor{blue}{R_{y}(t_{3})}} & \ctrl{-1}   & \qw  \\
}  
\caption{Circuit representing the propagator $\mathcal{U}^{(3)}$ for a pair of qubits. The fixed nonlocal two-qubit gate $U$ is implemented via the sequence of $\textsc{cnot}$s and rotation gates $R_{z}(t)$,$R_{y}(t)$. The gates $A_{i}$ and $A_{i}'$ represent the interlacing local unitaries, which vary at each time step.} \label{fig:circuit}
\end{figure}
\twocolumngrid

\subsection{Thermalization of entangling power}

The generalization allows for the subsystems $\mathcal{H}^A$ and $\mathcal{H}^B$ to have different dimensions $N$ and $M$, say $N \leq
M$. The operator entanglement still follows from the Schmidt decomposition of $U$ as in
Eq.~(\ref{eq:SchmidtU}) and is determined
by the singular values of the reshuffled  matrix $U^R$ of size $N^2\times M^2$.
This gives the vector of local invariants $\lambda_j$,
equal to eigenvalues of a positive matrix $U^R (U^R)^{\dagger}$.
The other set of invariants which in the
symmetric case came from the Schmidt decomposition of $US$ in Eq.~(\ref{eq:SchmidtUS}) now come from the singular values of
the square matrix $U^{T_A}$ of size $NM\times NM$. The generalization of the expressions in (\ref{eqn:ep}) and \eqref{eqn:gt} for entangling power \cite{Wang2002} and gate-typicality, respectively, based on the reshuffled and partially transposed matrix $U$ is given by:
\beq
\label{eq:gen_ep}
\begin{split}
e_p(U)=&\frac{1}{M^2(N^2-1)} \left[NM(NM+1)- \right. \\ & \left. \tr(U^R(U^R)^\dagger)^2
-\tr(U^{T_A}(U^{T_A})^\dagger)^2 \right],\\
g_t(U) = & \frac{1}{2NM(N+1)(M-1)}\left[N^2M^2 - NM -\right.\\ &
\left. \tr(U^R(U^R)^\dagger)^2 + \tr(U^{T_A}(U^{T_A})^\dagger)^2 \right]
\end{split}
\eeq
Note that we use a normalization factor that implies that the maximal
 entangling power is equal to unity, which is attained when 
$\tr(U^R(U^R)^\dagger)^2=M^2$ and $\tr(U^{T_A}(U^{T_A})^\dagger)^2=NM$. Hence our expression differs from the expression in \cite{Wang2002} 
by a factor $\tilde{e}_p^{\text{max}}= \frac{M(N-1)}{N(M+1)}$,
which is the unscaled maximum entangling power for a $N \times M$ bipartite system.

The generalization in Eq.~\eqref{eq:gen_ep} allows us to consider a situation where the bipartite interaction is non-zero but arbitrarily small and the second subsystem is considerably large, such as a thermal bath. In particular, we show in Theorem~\ref{thm:ep_avg} below, that
\beq
\label{eq:epUVavg}
\begin{split}
& \left \br e_p\left[ U (u_A\otimes u_B) V \right]\right \kt_{u_A, u_B} \\ &=
e_p(U)+e_p(V) -\, e_p(U)e_p(V)/\overline{e_p},\\
& \left \br g_t\left[ U (u_A\otimes u_B) V \right]\right \kt_{u_A, u_B} \\ &=
g_t(U)+g_t(V) -\, g_t(U)g_t(V)/\overline{g_t},
\end{split}
\eeq
Here $U$ and $V$ are any two unitary operators and the angular brackets indicate averaging over the local unitary operations with
$u_{A,B}$ sampled uniformly (Haar measure). The quantities $\overline{e_p}$ and $\overline{g_t}$ are the Haar averages over the $MN$ dimensional space that generalize the expressions in Eq.~(\ref{eq:CUEavgs}), for subsystems of equal dimensions, \blue{to}
\beq
\label{eq:generalepgtavg}
\overline{e_p} = \frac{N(M^2-1)}{M(NM+1)}, \, \ \ 
\overline{g_t} = \frac{ (N-1)(M+1)}{2(NM - 1)}.
\eeq

The above discussion can be directly related to \emph{operator scrambling}, which  measures
the spread of an initially localized operator \cite{Nahum2017,Chalker2018,Keyserlingk2018, Moudgalya2019}.
 In the simplest bipartite setting, operator scrambling can be characterized
 by analyzing to what extent initially local operators become non-local.
  In analogy to the entangling power, wherein the action 
of operators on initially unentangled states is measured \cite{Zanardi2000,Zanardi2001}, 
we may consider time evolution of initial product operators in the Heisenberg picture of quantum mechanics. 
Such an evolution is obtained as a special case of Eq.~\eqref{eq:epUVavg}, if $U=V^\dagger$. 
This may be interpreted as the average entangling power on conjugation 
of product operators with the bipartite operator $V$. 
As $e_p(V^{
\dagger})=e_p(V)$, our results imply that 
\beq
\label{eq:scrambpower}
\left \br e_p\left[ V^{\dagger} (u_A\otimes u_B) V \right]\right \kt_{u_A, u_B} =
e_p(V)\left[ 2-\dfrac{e_p(V)}{\overline{e_p}}\right].
\eeq
This provides a way to quantify the \emph{scrambling power} of bipartite unitary operators.
It would be interesting to generalize  such  a scrambling power for a multipartite setup, 
in analogy to the entangling power applied recently for several subsystems \cite{Linowski_2020}.

 
It might be surprising that simple relations \eqref{eq:epUVavg} exists for $e_p$ and $g_t$,
and this is due to the fact  that they concern the average values. 
Similar relations hold for averaged operator entanglements $\left \br E\left[ U (u_A\otimes u_B) V \right]\right \kt_{u_A, u_B}$ and 
$\left \br E\left[ U (u_A\otimes u_B) V \, S \right]\right \kt_{u_A, u_B}$,
 but they mix among themselves in a less transparent way.
Although the statements above concern averages over local unitaries,
 they provide some immediate insights. For instance, choosing $U$ and $V$ such that
\beq
e_p(U)+e_p(V) -\, e_p(U)e_p(V)/\overline{e_p} \; > \; e_p(UV),
\eeq
we infer that there exist local unitaries which enhance the entangling power beyond a serial application
 of $V$ and $U$.
Relations in Eq.~(\ref{eq:epUVavg}) can be used to iterate, by inserting independent
local operators between nonlocal operators. For example, one becomes
\beq
\begin{split}
&\left \br e_p\left[ U (u_A\otimes u_B) V (u'_A\otimes u'_B) W \right]\right \kt_{u_A,
u_B,u'_{A}, u'_B} \\
& = e_p(U)+e_p(V)+e_p(W)-[e_p(U)e_p(V)\\ 
&+ e_p(V)e_p(W)+e_p(W)e_p(U)]/\overline{e_p} \\
&+ e_p(U)e_p(V)e_p(W)/(\overline{e_p})^2.
\end{split}
\eeq
The above equation indicates a certain ``decoupling" that is induced by local dynamics. It is necessary
that the local operators at each product be independent,
else the correlations prevent such an expression. However,  
previous work suggests  \cite{JMZL2017} that they provide
a good approximation, also in the case if the matrices $u_A$, $u_A'$ and $u_B$, $u'_B$ are pairwise identical,
 provided they are Haar - typical random unitaries. 


We now formally state the result concerning the thermalization of the entangling power and gate typicality averaged over 
random local dynamics in the generalized setting of unequal dimensions of the subsystems. The final formulae remain the same as those displayed in \cite{JMZL2017}, indicating a certain
universality in them. However we present an alternate proof here, based on irreducible representations of the unitary group. Due to the technical nature of the proof, we present the details separately in Appendix \ref{app:details}.

\begin{theorem}\label{thm:ep_avg}
Let $U$ and $V$ be bipartite unitary operators on $\mathcal{H}^A_N\otimes\mathcal{H}^B_M$ and
$u_A$, $u_B$ be sampled from the groups
$U(N)$ and $U(M)$ of unitary matrices according to their Haar measures. Then the following relation holds,
\beq
\label{eq:epUVavgTheo}
\begin{split}
& \left \br e_p\left[ U (u_A\otimes u_B) V \right]\right \kt_{u_A, u_B} \\ &=
e_p(U)+e_p(V) -\, e_p(U)e_p(V)/\overline{e_p},
\end{split}
\eeq
where   $\overline{e_p}=\br e_p(W)\kt_W$ 
denotes the mean entangling power
averaged over random unitary matrices $W$ 
sampled according to the Haar measure on 
$U(NM)$.
\end{theorem}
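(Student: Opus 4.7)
The plan is to reduce the averaging of $e_p$ and $g_t$ over the local group $U(N)\times U(M)$ to a Schur--Weyl decomposition on two copies of the bipartite Hilbert space, following the four-party picture described in Appendix~\ref{app:4party}. Both invariants are determined, via formula~\eqref{eq:gen_ep}, by the two purities $P_R(W):=\tr[(W^R W^{R\dagger})^2]$ and $P_T(W):=\tr[(W^{T_A} W^{T_A\dagger})^2]$. Applying the swap trick $\tr(M^2)=\tr[(M\otimes M)F]$ to each, I first rewrite $P_R(W)$ and $P_T(W)$ as traces on the two-copy Hilbert space $(\mathcal{H}_N^A\otimes\mathcal{H}_M^B)^{\otimes 2}$ of operators built from $W^{\otimes 2}$ and $(W^{\dagger})^{\otimes 2}$, against fixed permutations that respectively encode the reshuffling and the partial transpose.

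Substituting $W=U(u_A\otimes u_B)V$ makes each local unitary appear as $u_A^{\otimes 2}$ together with $u_A^{\dagger\otimes 2}$ on the two copies of $\mathcal{H}_N^A$, and likewise for $u_B$ on the two copies of $\mathcal{H}_M^B$. By Schur--Weyl duality, the Haar integral
\[
\int_{U(N)} du_A \; u_A^{\otimes 2}\otimes u_A^{\dagger\otimes 2}
\]
is the projector onto the commutant of the diagonal $U(N)$-action on $\mathcal{H}_N^{\otimes 2}$, a two-dimensional algebra spanned by the projectors $P_+^A, P_-^A$ onto the symmetric and antisymmetric subspaces of dimensions $d_\pm^A=N(N\pm 1)/2$; the statement for $u_B$ is analogous, with projectors $P_\pm^B$ of dimensions $d_\pm^B=M(M\pm 1)/2$. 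Performing both integrals reduces $\langle P_R(W)\rangle$ and $\langle P_T(W)\rangle$ each to a sum over the four pairs $(\sigma,\tau)\in\{+,-\}^2$ of factorized contributions
\[
\tfrac{1}{d_\sigma^A d_\tau^B}\,\tr\!\bigl[\mathcal{X}_U\,(P_\sigma^A\otimes P_\tau^B)\bigr]\,\tr\!\bigl[\mathcal{X}_V\,(P_\sigma^A\otimes P_\tau^B)\bigr],
\]
where $\mathcal{X}_U$ and $\mathcal{X}_V$ are explicit two-copy operators built from $U$ and $V$ alone; each trace factor is a linear combination of $P_R(\cdot)$, $P_T(\cdot)$, and the unitarity-enforced constant $NM$, hence cleanly expressible in terms of $e_p$ and $g_t$ via~\eqref{eq:gen_ep}.

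The hard part will be the bookkeeping that turns the four product terms back into the claimed bilinear form. I expect the cleanest route to be a change of basis from $\{P_\sigma^A\otimes P_\tau^B\}$ to one aligned with $\{1, e_p, g_t\}$ and their product, in which the twirl acts diagonally with entries $1-e_p(\cdot)/\overline{e_p}$ and $1-g_t(\cdot)/\overline{g_t}$; the product identities
\[
1-\langle e_p\rangle/\overline{e_p}=\bigl(1-e_p(U)/\overline{e_p}\bigr)\bigl(1-e_p(V)/\overline{e_p}\bigr)
\]
and the companion for $g_t$ then emerge as multiplicativity of these diagonal pieces across the two nonlocal gates. Two sanity checks should fix any remaining ambiguity in the irrep-dimension normalizations: (i) at $U=\mathbb{1}$ or $V=\mathbb{1}$ the average must collapse to $e_p(V)$ or $e_p(U)$, since local unitaries do not change $e_p$; and (ii) at $e_p(U)=\overline{e_p}$ the single Haar-typical interaction must already saturate the average to $\overline{e_p}$, both of which are built into the factorized form above.
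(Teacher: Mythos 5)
Your proposal is correct and follows essentially the same route as the paper's Appendix~\ref{app:details}: both lift the problem to two copies of the bipartite space, use Schur's lemma to collapse the local twirl onto the four products of symmetric/antisymmetric projectors $\Pi^{\pm}_{AC}\Pi^{\pm}_{BD}$, and recover the coefficients from swap-trick traces that are linear combinations of $\tr[(V^R V^{R\dagger})^2]$, $\tr[(V^{T_A}V^{T_A\dagger})^2]$ and constants, hence of $e_p(V)$. The only cosmetic difference is that the paper starts from Zanardi's representation $e_p(U)\propto\tr(U^{\otimes 2}\Omega^{++}_p U^{\dagger\otimes 2}\Pi^-_{AC})$ rather than from the purities themselves, which makes the final bookkeeping slightly more direct.
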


\begin{corollary}
Let $ U^{(n)} \equiv U\,(u_{A_{n-1}}\otimes u_{B_{n-1}})\,U\ldots (u_{A_{1}}\otimes
u_{B_{1}})\, U$, where $u_{{A}_j} \in U(N)$ and $u_{B_j}\in U(M)$ are unitary matrices.
Let $V=U^{(n-1)}$, so that $U^{(n)}=U (u_{A_{n-1}} \otimes u_{B_{n-1}}) V$, then from the
theorem above
\begin{align}\label{eq:epUavg_nloc_Theo}
\begin{split}
\br e_p{(U^{(n)})}\kt_{\rm Loc} &= e_p(U) + \left[1- \frac{e_p(U)}{\overline{e_p}}\right]\br
e_p(U^{(n-1)})\kt_{\rm Loc} 
\\
&= \overline{e_p}\left[1 - \left(1 - \frac{e_p(U)}{\overline{e_p}}\right)^n\right],
\end{split}
\end{align}
where $\langle \ \cdot \ \rangle_{\rm Loc}$ denotes 
averaging over the set of local operators $u_{A_{n-1}},u_{B_{n-1}} \cdots
u_{A_1},u_{B_1} $ generated independently according to the Haar measure.
\end{corollary}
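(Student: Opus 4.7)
The proof is essentially a direct iteration of Theorem~\ref{thm:ep_avg}, so the plan is to peel off one block at a time, use independence of the interlacing locals to close the recursion, and then solve it explicitly.

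First I would identify the last non-local block. Writing $V := U^{(n-1)} = U\,(u_{A_{n-2}} \otimes u_{B_{n-2}})\,U\cdots(u_{A_1}\otimes u_{B_1})\,U$, we have $U^{(n)} = U\,(u_{A_{n-1}}\otimes u_{B_{n-1}})\,V$. Crucially, $V$ depends only on $\{u_{A_j},u_{B_j}\}_{j=1}^{n-2}$ and is therefore independent of the outermost pair $(u_{A_{n-1}}, u_{B_{n-1}})$. Applying Theorem~\ref{thm:ep_avg} with the generic $(u_A,u_B)$ of the theorem identified with $(u_{A_{n-1}},u_{B_{n-1}})$, and treating $V$ as a fixed bipartite unitary for the purpose of this partial average, yields
\begin{equation*}
\bigl\langle e_p(U^{(n)}) \bigr\rangle_{u_{A_{n-1}},u_{B_{n-1}}} \;=\; e_p(U) + \Bigl[1-\tfrac{e_p(U)}{\overline{e_p}}\Bigr]\, e_p(V).
\end{equation*}

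Next I would take the remaining average over $\{u_{A_j},u_{B_j}\}_{j=1}^{n-2}$. Since the right-hand side above is affine in $e_p(V)$ and the remaining locals are independent of those already integrated out, Fubini gives the recursion
\begin{equation*}
\bigl\langle e_p(U^{(n)})\bigr\rangle_{\rm Loc} = e_p(U) + \Bigl[1-\tfrac{e_p(U)}{\overline{e_p}}\Bigr]\bigl\langle e_p(U^{(n-1)})\bigr\rangle_{\rm Loc},
\end{equation*}
which is exactly the first displayed equality in the corollary. The base case $n=1$ is immediate since $U^{(1)}=U$ and therefore $\langle e_p(U^{(1)})\rangle_{\rm Loc} = e_p(U)$.

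Finally I would solve the recursion in closed form. Setting $a_n := \langle e_p(U^{(n)})\rangle_{\rm Loc}$ and $r := 1-e_p(U)/\overline{e_p}$, the recursion $a_n = e_p(U) + r\,a_{n-1}$ together with $a_1 = e_p(U)$ has the standard geometric-series solution
\begin{equation*}
a_n = e_p(U)\sum_{k=0}^{n-1} r^k = e_p(U)\,\frac{1-r^n}{1-r} = \overline{e_p}\,(1-r^n),
\end{equation*}
where in the last step I use $1-r = e_p(U)/\overline{e_p}$. Substituting $r$ back gives the claimed closed form. A quick sign/range check: $e_p(U)\in[0,1]$ and $\overline{e_p}\in(0,1]$, so $|r|\le 1$, and $r<1$ whenever $e_p(U)>0$, confirming the exponential saturation to $\overline{e_p}$ asserted in the corollary. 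There is no genuine obstacle here; the only item requiring care is the independence assumption on the interlacing locals, which is built into the definition of $\langle\,\cdot\,\rangle_{\rm Loc}$ and is precisely what makes the recursion close.
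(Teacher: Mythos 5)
Your proposal is correct and follows essentially the same route as the paper: peel off the outermost local pair, apply Theorem~\ref{thm:ep_avg} with $V=U^{(n-1)}$ treated as a fixed unitary, use the independence of the interlacing locals to close the affine recursion, and sum the resulting geometric series using $1-r=e_p(U)/\overline{e_p}$. The extra details you supply (the base case $n=1$ and the explicit Fubini step) are exactly the points the paper leaves implicit.
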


 %

Similarly it follows that
\beq
\label{eq:gtUpown}
\br g_t{(U^{(n)})}\kt_{\rm Loc} = {\bar g_t} 
\left[1 - \left(1 - \frac{g_t(U)}{\overline{g_t}}\right)^n\right],
\eeq
where the average value $ {\bar g_t}$ is
provided in Eq. (\ref{eq:generalepgtavg})
 and  for $M=N$ reduces to $1/2$. 

A proof of this result is indicated in Appendix~\ref{app:details}.

Eqs.~\eqref{eq:epUVavgTheo} and~\eqref{eq:gtUpown} constitute our main results concerning thermalization
of properties of quantum gates iterated sequentially in discrete time steps.
For any bi-partite gate $U$ with arbitrary small, but positive entangling power, its repeated application
 with local unitaries sandwiched between consequtive time steps,
 leads to a  generic  gate with entangling power and gate typicality
 characteristic to the average over the ensemble of Haar random gates from $U(NM)$. The same is illustrated in Fig.~\ref{fig:k2k3} for qubits and qutrits in the $e_p-g_t$ plane. The evolution of $U^n$ and $\langle U^{(n)}\rangle_{\rm Loc}$ is shown for a particular (non-generic) choice of the initial unitary $U$, which selected from vicinity of a local gate, as $e_p(U)$ and $g_t(U)$ are sufficiently small. While $U^n$ explores the set $K_N$ in a ``billiard" like dynamics \cite{Mandarino2018}, $\langle U^{(n)}\rangle_{\rm Loc}$ converges exponentially to the CUE average.
\begin{figure}[!tbp]
 \centering
\includegraphics[width=0.5\textwidth]{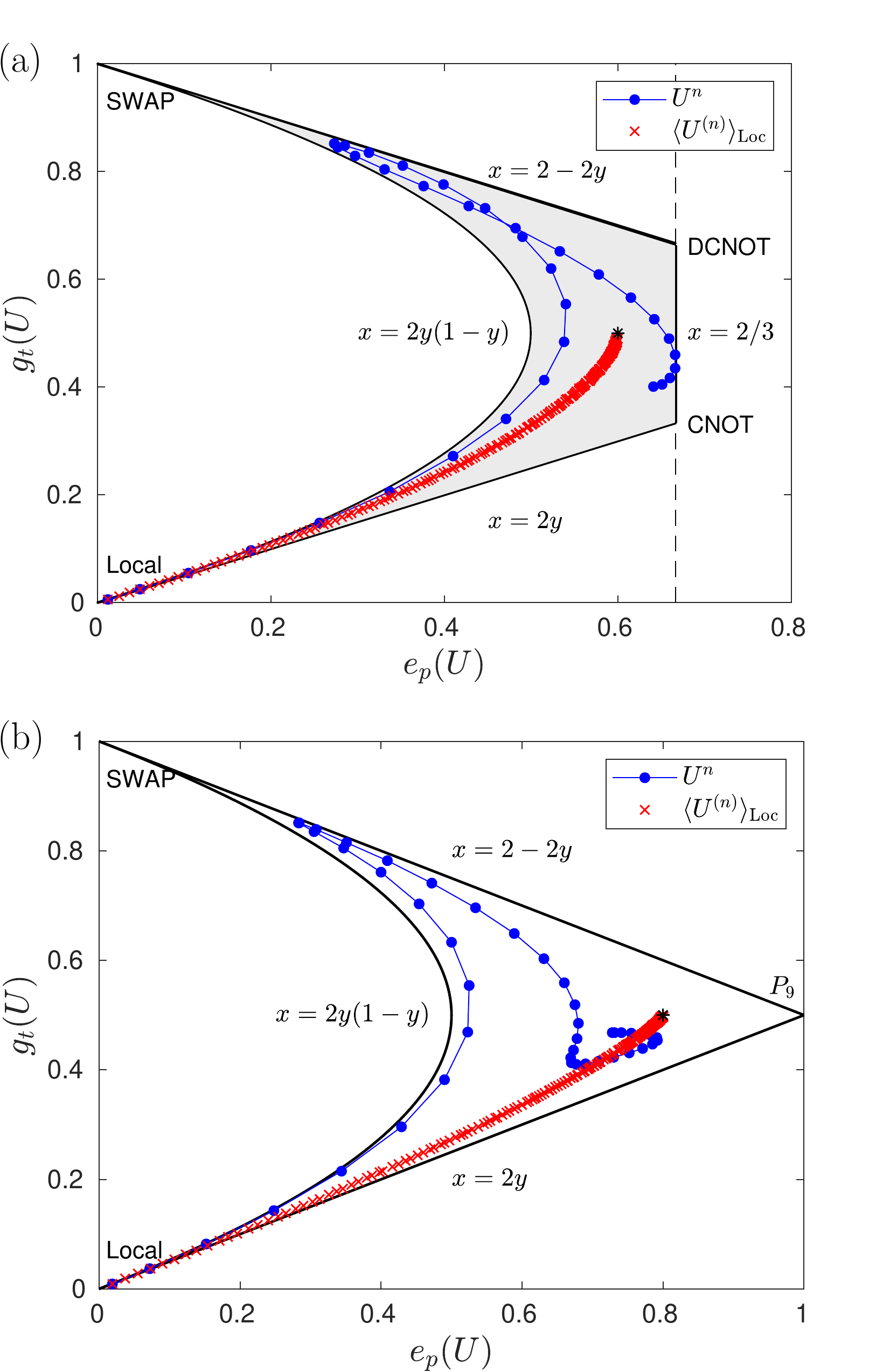}
\caption{Exemplary dynamics in the set $K_N$ 
starting in a vicinity of a local gate 
for (a) $N=2$ and (b) $N=3$.
The average value 
  $\langle U^{(n)}\rangle_{\rm Loc}$,
  represented by red crosses  ($\times$), 
  saturates to the CUE average (*),
    while $U^n$ follows a billiard type dynamics \cite{Mandarino2018}. Blue line joining the points in $U^n$ is added to guide the eye. At each time step, the average is taken over $10^4$ local random unitaries.  
}
     \label{fig:k2k3}
    \end{figure}
 
 Our results involve averaging over different local operators at
each time step and may be considered a foil for quantities such as $e_p[ ((u_A \otimes
u_B) U)^n]$ if $u_{A,B}$ are sufficiently random and have no special
relationship with $U$. Thus while 
the above results may be applicable for non-autonomous Floquet systems, 
they are also of relevance to autonomous ones.
In the case of a many-body spin chain, the  effect of thermalization of the average 
entangling power to equilibrium  has recently been reported \cite{PalLak2018}
for the symmetric case of $N=M$.
The generalization presented here allows us to extend such studies 
of thermalization to the important case of different number of spins in each subsystem.

\subsubsection{ Example: random diagonal nonlocal operators}

In \cite{JMZL2017} the entangling powers of $U^n$ and $U^{(n)}$ were evaluated for a few gates $U$, for the symmetric case $M=N$. Here we augment these results significantly,
by numerically showing that for $N \neq M$, the
thermalization of the entangling power to its average value $\overline{e_p}$ 
 holds also in the case of  a  very small interaction between both subsystems.
In particular, we analyze below the smallest interesting case of a qubit-qutrit system.


Consider a diagonal unitary matrix on $\mathcal{H}_N^A\otimes \mathcal{H}^B_M$
with entries
\beq
\label{eq:diagrand}
 (U_{\epsilon})_{m\alpha;n\beta} = e^{2\pi i\epsilon \xi_{m\alpha}}\delta_{mn}\delta_{\alpha\beta},
\eeq
 where $\epsilon\in[0,1]$ and $\xi$ is chosen
randomly and uniformly from $[-1/2,1/2)$.
Such diagonal unitaries are used to model interactions in several deterministic Floquet operators 
\cite{Shashi16,Lakshminarayan2016,Tomsovic2018}. While $\epsilon=0$ is evidently the case of zero interaction, 
$\epsilon=1$ represents the  maximal interaction.
 As $\xi$ is a random variable, $(U_{\epsilon})_{m\alpha;n\beta}$ defines an ensemble of entangling gates.
 Their entangling power was studied in \cite{Lakshminarayan2014} for
the case $\epsilon=1$, while for general $\epsilon$, it has been used in studies of
spectral transitions and entanglement \cite{Shashi16,Lakshminarayan2016,Tomsovic2018}.

\begin{figure}[!tbp]
 \centering
\includegraphics[width=0.5\textwidth]{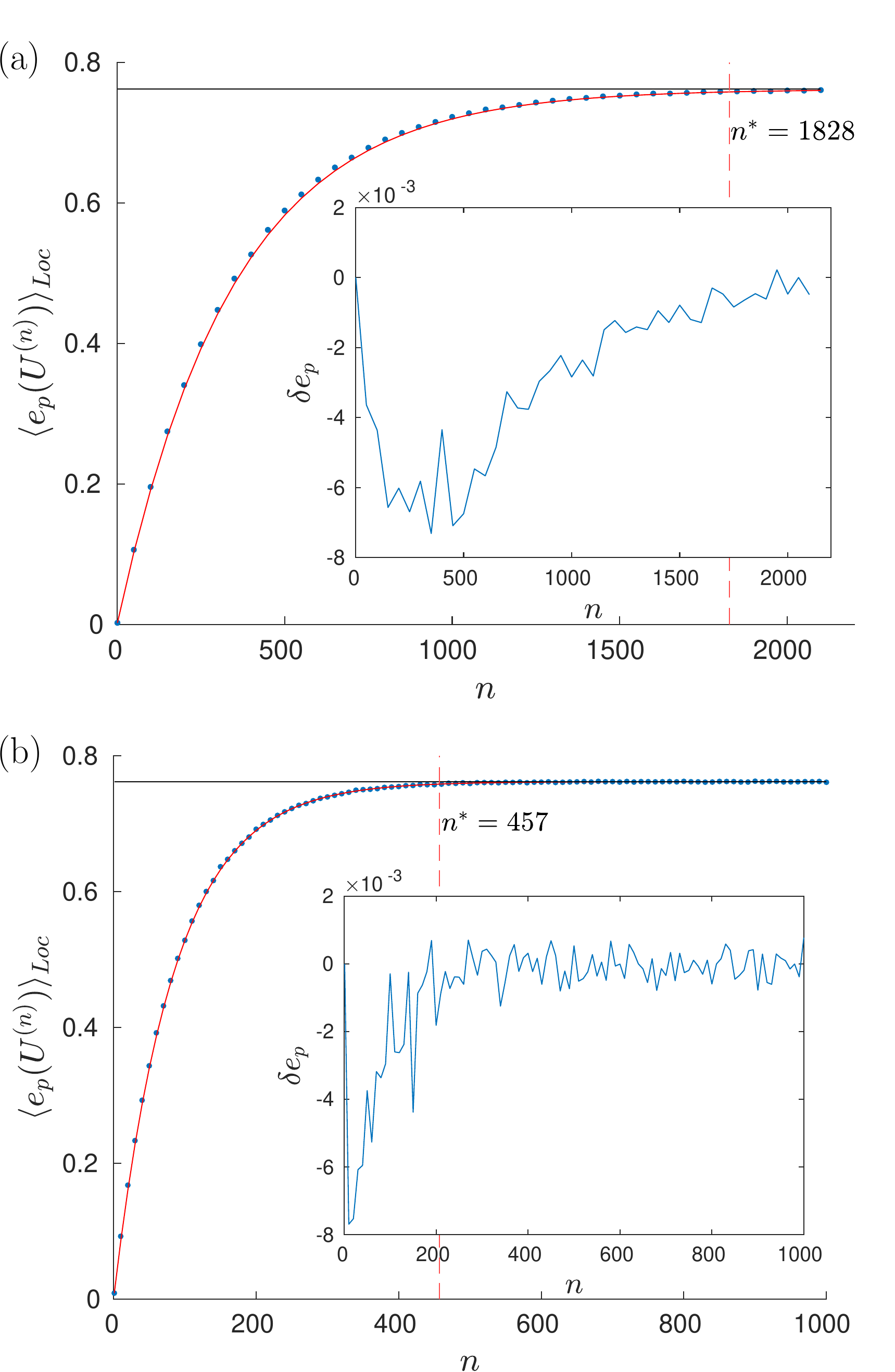}
\caption{Saturation of entangling power $e_p$ for subsystems
of size $N=2$ and $M=3$ as a function of discrete time $n$
for  two values of the interaction parameter $\epsilon$, $\epsilon = 0.025$ and 0.05 in (a) and (b), respectively.  
At each time step, the average is taken over $10^4$ local
random unitaries. 
Solid curve represents  Eq.~\eqref{eq:epUavg_nloc_Theo}, while
 the horizontal line denotes the Haar average $\overline{e_p}$  given by
Eq.~\eqref{eq:generalepgtavg}. Saturation  time  scales as $1/\epsilon^2$, see \eqref{eq:nstar}. 
The inset shows deviations from the theoretical values. }
     \label{fig:epsilon_diag}
    \end{figure}
    
For a fixed realization of the global diagonal $U_{\epsilon}$, even if $\epsilon$ is very
small, $\langle e_p(U_{\epsilon}^{(n)})\rangle_{\rm Loc}$ reaches the Haar average
$\overline{e_p}$ due to interlacing of random local unitaries as illustrated in
Fig.~(\ref{fig:epsilon_diag}) in the case $MN=6$. For $e_p(U_{\epsilon}) \ll 1$,
Eq.~\eqref{eq:epUavg_nloc_Theo} implies that
\beq
\br e_p{(U_{\epsilon}^{(n)})}\kt_{\rm Loc}\approx \overline{e_p}\, \left[1-\exp \left(-n
\frac{e_p(U_{\epsilon})}{\overline{e_p}}\right)\right].
\eeq
The saturation value reached is equal to  the global average $\overline{e_p}$
which according to Eq.~(\ref{eq:generalepgtavg}) reads $16/21 \approx 0.762$ for $N=2$ and $M=3$.
Smaller the interaction parameter, the longer it takes to 
thermalize and reach the asymptotic value.
 Deviations from the theoretical curve shown in the insets of Fig.~(\ref{fig:epsilon_diag}) are of the order of $1/\sqrt{n_{loc}}$, where $n_{loc}$ denotes the number of realization of local gates over which the averaging is done.
  Hence the number of locals $n^*$ required to push $e_p(U_{\epsilon}^{(n)})$ to the Haar average depends on $\epsilon$ as $n^* \sim\overline{e_p}/e_p(U_{\epsilon})$.

The time of thermalization can be estimated for the case of local evolution
given by the tensor product of diagonal random gates.
For a diagonal unitary $U_{\epsilon}$ of size $NM\times NM$, the reshuffled matrix is of
size $N^2\times M^2$ with $N(N-1)$ rows and $M(M-1)$ columns equal to zero. To
compute $\tr(U^R(U^R)^\dagger)^2$ in Eq.~\eqref{eq:gen_ep}, it is  thus sufficient to consider
$\tr(AA^\dagger)^2$, where $A$ is obtained by reshaping the diagonal of $U_{\epsilon}$;
$A_{jk} = e^{2\pi i\epsilon\xi_{\alpha}}$, $\alpha = (j-1)M+k$, $j=1,\dots,N$,
$k=1,\dots,M$. Here $AA^\dagger$ is a Hermitian matrix of size $N$ and for $\epsilon$ small,
$(AA^\dagger)_{jj} = M$ and off-diagonal entries $(AA^\dagger)_{jk}\approx M(1\pm
i\epsilon)$, $j<k$. Thus,
\beq\label{eq:X_diag_NM}
\tr(U_{\epsilon}^R(U_{\epsilon}^R)^\dagger)^2 = \tr(AA^\dagger)^2 \approx N^2M^2 - N(N-1)M^2 \epsilon^2
\eeq
The partial transpose of a diagonal unitary remains unchanged, hence
\beq\label{eq:Y_diag_NM}
\tr(U_{\epsilon}^{T_A}(U_{\epsilon}^{T_A})^\dagger)^2 = NM.
\eeq
Inserting Eq.~\eqref{eq:X_diag_NM} and Eq.~\eqref{eq:Y_diag_NM} in Eq.~\eqref{eq:gen_ep}
gives $e_p(U_\epsilon)\approx N\epsilon^2/(N+1)$, and therefore
\beq
\label{eq:nstar}
n^*\approx \frac{(N+1)(M^2-1)}{M(NM+1)}\; \frac{1}{\epsilon^2},
\eeq
which for $N=2$, $M=3$ gives $n^* \approx 8 \epsilon^{-2} /7$.
For $\epsilon=0.025$ and $\epsilon=0.05$, the numerical values read $n^*\approx 1828$ and $n^* \approx 457$ respectively, as shown in Fig.~(\ref{fig:epsilon_diag}). 

For large dimensions $M$ and $N$, 
one may average additionally over the diagonal ensemble of the entangling gates themselves.
It is possible to approximate such an ensemble averaged $\tr(AA^\dagger)^2$ by $N^2M^2\text{sinc}^4
(\pi \epsilon ) \approx (1-2 \pi^2 \epsilon^2/3)N^2M^2$,
where  sinc$(x) = \sin x/ x$. Hence $e_p(U_{\epsilon})\approx
2 \pi^2 \epsilon^2/3$ and $n^*\sim 3 \overline{e_p}/(2 \pi^2 \epsilon^2)$, 
so that  the saturation time  scales  as $1/\epsilon^2$.

Time evolution of quantum entanglement for initially separable states has been the subject of many studies \cite{MillerSarkar1999,BandyoLak2002, Fujisaki2003,BandyoLak2004, Dobrzanski2004,PetitJean2006}, 
often in the context of weakly interacting highly chaotic systems.
A recent study \cite{Jethin2020} combining a recursive application of perturbation theory and the theory of 
random matrices indicates an exponential saturation of entanglement measures and is consistent with our findings.
 The approach advocated here  is not perturbative
and it is based on averaging of the entangling power
 over independent local operators at each time step.
 The rate at which the average $\langle e_p(U^{(n)})\rangle $ approaches the global RMT value $\bar e_p$
  depends only on the entangling power of the nonlocal single-step operator $U$ and 
  is hence fully interaction driven.

Note that the techniques applied in this work are not 
sensitive to the degree of chaos in the classical model consisting of two uncoupled systems.
Thus analyzing the time evolution of averaged entangling power 
we are not in position to investigate the role of the Lyapunov exponent
of the corresponding classical system,
which was found essential \cite{PetitJean2006} 
for the rate of growth of the average entanglement of quantum states initially localized in the phase-space. 
That the entangling power averages over all initial product states equally, implies that any 
special properties that arise for coherent initial states are washed out.
However, further work is needed for clarifying the connections and differences between both approaches.


\subsection{Thermalization of the spectra of reshuffled and partial transposed unitaries}
We have analyzed above, how  the local unitary invariants of entangling power and operator entanglement,
and equivalently,  the entropies of the density matrices in \eqref{eq:rhoRandTdefn},
thermalize in time to their asymptotic values. However, this only reflects a more detailed approach to 
equilibrium of the spectra of related operators. In particular, it is illuminating 
to analyze complex eigenvalues of non-unitary
reshuffled and partially transposed matrices, $U^{(n) R}$ and $U^{(n) T_A}$,
which allow us to infer, to what extent the analyzed gate
approches properties characteristic to generic unitary matrices.

A large non-hermitian random matrix $G$ from the Ginibre ensemble,
 containing independent  complex random Gaussian entries,
 displays spectrum covering uniformly the unit disk, 
according to the universal circular law of Girko \cite{Girko1985,Bai1997}.
If a random unitary matrix $U$ is large enough,
the unitarity constraints become  so weak that after reshuffling
the matrix $U^R$ shows statistical properties  close to these of the Ginibre ensemble
\cite{Kus2013,Mandarino2018} - see also recent rigorous results~\cite{MPS20}.
Thus the corresponding  positive matrix, $U^R(U^R)^{\dagger}$,
 display spectra in agreement with the 
  the Mar{\v c}enko-Pastur law~\cite{MPlaw},
   $P_{MP}(x)=(2 \pi)^{-1}\sqrt{(4-x)/x}$,
  derived to describe the spectral density of   random Wishart matrices $GG^{\dagger}$.

 Let $x_i$ denote  eigenvalues of the density matrices $\rho_R(U^{(n)})$ or $\rho_T(U^{(n)})$
 rescaled by the dimension $N^2$,
 which are equal to scaled squared singular values of $(U^{(n)})^R$ and $(U^{(n)})^{T_A}$
 respectively.
 The thermalization of properties of  the gate $U^{(n)}$ with the time $n$ 
 will be reflected in the distribution  $P(x)$, 
 which for a large dimension $N$ converges to 
  the distribution $P_{MP}(x)$.
We will introduce  local averaged purities of both auxiliary density matrices,
\beq
\label{eqn:XnYn}
X_n = \br \tr[\rho_R^2(U^{(n)}]\kt ,\qquad Y_n = \br \tr[\rho_T^2(U^{(n)})]\kt .
\eeq
Then  Mar{\v c}enko-Pastur law implies that $X_n$ and $Y_n$ are of the order of $2/N^2$. A  recursion relations for these quantities starting from $(X_1,Y_1)$  was derived in \cite{JMZL2017} for the symmetric case, $M=N$. We will now demonstrate  thermalization in the spectra of density operators $\rho_R$ and $\rho_T$
for the model of the diagonal unitary ensemble and  controlled unitaries.

\subsubsection{Spectral properties for random diagonal nonlocal operators}
Consider the special case of the model 
with nonlocal matrix $U$ being  diagonal with random phases, as in Eq.~(\ref{eq:diagrand}).
To focus on the effect of time evolution itself, we set the interaction strength to
the maximal value,  $\epsilon=1$, choose  $N=M$, and  denote the diagonal nonlocal matrix by $U=U_d$. 
Average purities of the density matrices 
defined in  Eq.~(\ref{eqn:XnYn})  for $n=1$ read, 
\beq
\overline{X_1} = \frac{2N-1}{N^2} \quad \text{and} \quad \overline{Y_1} = \frac{1}{N^2}.
\eeq
In this case there are no local operators and the averaging indicates only the average with 
respect to the random phases of the nonlocal operators $U_d$.
The first one is easy to derive from the reshuffled operator, see \cite{Lakshminarayan2014}
and since the partial transpose of a diagonal unitary matrix remains diagonal,
$\rho_{T_A}(U_ d)=I_{N^2}/N^2$ hence, $Y_1 = 1/N^2$.
Thus typical diagonal unitaries, even for $\epsilon=1$, are far from being 
thermalized, although their entangling power $e_p(U_d)= (N-1)/(N+1)$ is large. This
follows from Eq.~(\ref{eq:ep_symm}), see also \cite{Lakshminarayan2014},
in  which a different normalization of the entangling power is used.

 \begin{figure}[!tbp]
   \centering
\includegraphics[width=0.55\textwidth]{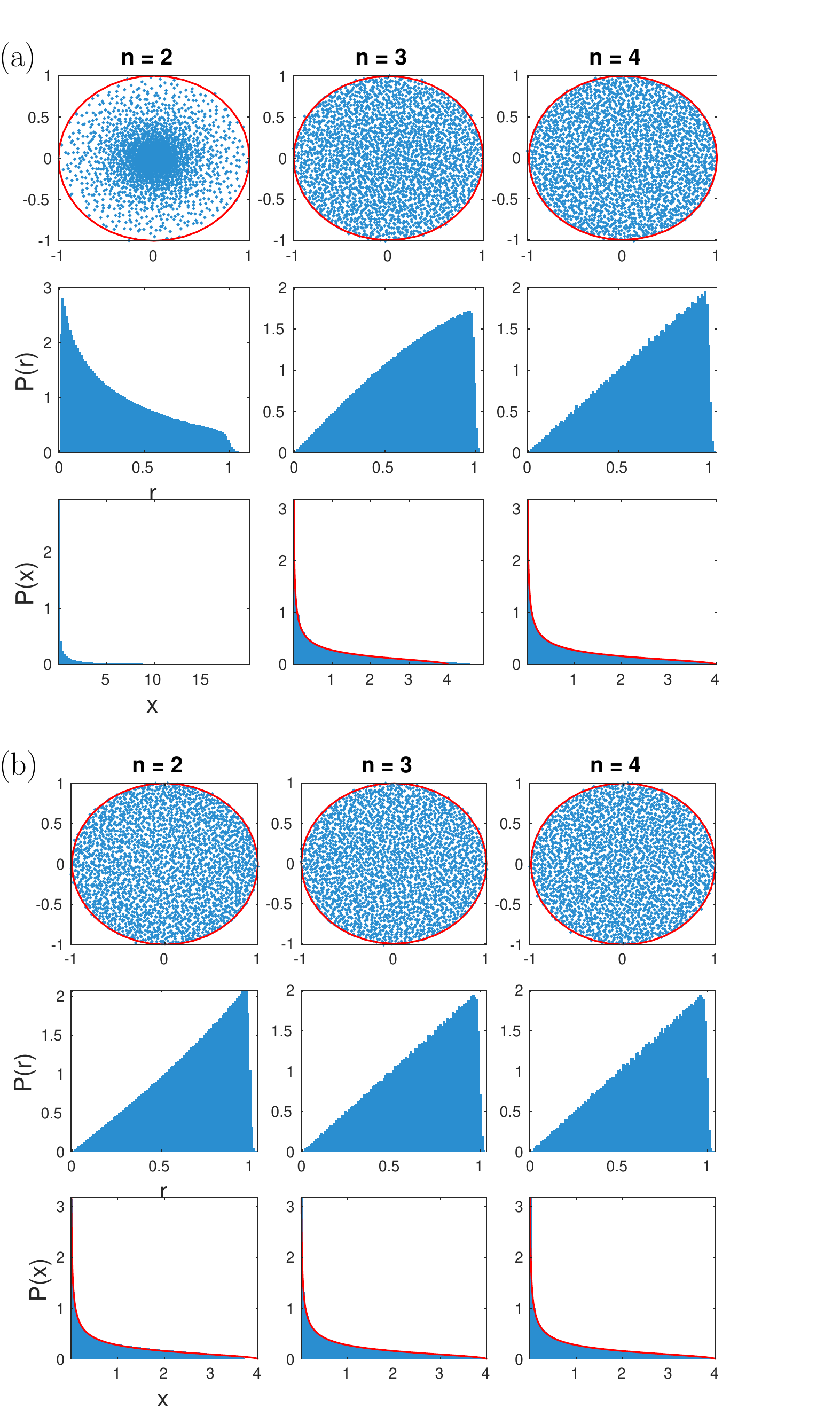}
    \caption{Spectral thermalization of $U^{(n)}$ 
    for a random diagonal nonlocal unitary matrix $U_d$ of dimension $N^2$ 
    with $N=50$. Distributions corresponding to (a) of reshuffled matrix 
     $(U^{(n)})^R$ and (b) the partially transposed matrix $(U^{(n)})^{T_A}$,
     are shown for times $n=2,3$ and $4$.
      Top rows show the complex eigenvalues  $z$,
       while  middle rows show their radial density, $P(r)$ with $r=|z|$. 
      The bottom
      rows show the distribution of
       scaled squared singular values of $(U^{(n)})^R$ and $(U^{(n)})^{T_A}$ respectively,
       which are  compared with the  Mar{\v c}enko-Pastur distribution (solid curve).}
     \label{fig:dist_ud}
    \end{figure}

For $n=2$ we consider an interlacing dynamics 
determined by random local unitary operators acting between two nonlocal operators,
$U^{(2)}=U_d (u_1\otimes u_2) U_d$, and obtain
\beq
\overline{X_2} = \frac{6}{N^2+1} \quad \text{and} \quad \overline{Y_2} = \frac{2
(N^4+N^2+1)}{N^4(N+1)^2}.
\eeq
Since $\overline{Y_2} \sim 2/N^2$, this quantity
related to  the partial transpose of $U_d$,
 is close to its asymptotic value
already after two applications of 
typical nonlocal diagonal operators. On the other hand,
the dual quantity  $\overline{X_2}$ behaves as  $6/N^2$,
which  indicates significant deviations from typicality. 

These effects are visible in Fig.~(\ref{fig:dist_ud}) in multiple ways.
 The eigenvalues of 
$(U_d^{(2)})^R$ are not distributed uniformly inside  the unit disk, which is the case for the spectrum of  $(U_d^{(2)})^{T_A}$. 
 For the former operator there are several small eigenvalues 
which reflects the fact  at $n=1$, the matrix $U_d^R$ is of rank $N$, rather than $N^2$. 
Even in the case of the partial transpose, there are visible deviations from linear structure of 
the radial distribution, which are not observed  for $n \ge 3$.
Although $\overline{X_3} \sim 2/N^2$, there exist deviations in the radial
distribution, which thus serve as a sensitive indicator of thermalization. At $n=4$
 the properties of the partial transpose and the reshuffled matrix
 are close to the matrices from the Ginibre ensemble 
 of dimension $N^2$, and the singular values follow the Mar{\v c}enko-Pastur law to a 
good approximation.

\subsubsection{Spectral properties for controlled unitary operators}

While the diagonal nonlocal operator lead to fast thermalization,
for some other models this process occurs considerably slower.
 Consider a controlled unitary operator acting on a symmetric product space,
\beq
U = P_{A_1}\otimes \mathbb{1}_B + P_{A_2}\otimes u_B
\eeq
where $P_{A_i}$ are orthogonal projectors such that $P_{A_1}+P_{A_2} =\mathbb{1}_N$, $P_{A_i} P_{A_j}=\delta_{ij}P_{A_i}$,
and $u_B \in U(N)$. It is known  \cite{Cohen2013}
that any two-qubit unitary gate of Schmidt rank two 
forms a controlled-unitary
of this kind and it can be implemented with
a maximally entangled state of two qubits
and  local operations and classical
communication (LOCC).
Thus this example may be considered the simplest entangling unitary. 
The reshuffled operator reads
\beq
U^R=|P_{A_1}^R\kt \br \mathbb{1}^R|+ |P_{A_2}^R\kt \br u_B^{*R}|,
\eeq
where $|M\kt$ reshapes or vectorizes the operator with elements $M_{ij}=\br i |M|j \kt$ into a column vector with entries $M_{ij}$. Noting that
$\br M_1|M_2\kt=\tr(M_1^{\dagger} M_2)$, we get 
\beq
\begin{split}
\rho_R=&\frac{1}{N^2}U^R U^{R \dagger}=\frac{1}{N^2}\left[ N(|P_{A_1}^R \kt \br
P_{A_1}^R|+|P_{A_2}^R \kt \br P_{A_2}^R|)+ \right.\\
&\left. \tr(u_B)^* |P_{A_1}^R \kt \br P_{A_2}^R|+\tr(u_B)|P_{A_2}^R \kt \br P_{A_1}^R|
\right],
\end{split}
\eeq
which is only a rank-2 operator. In contrast, as $(u_A \otimes u_B)^{T_{A}}=u_A^T \otimes
u_B$ and as transposes of projectors
remain projectors, $U^{T_{A}}$ is also unitary and hence $\rho_T=\mathbb{1}_{N^2}/N^2$, a
maximally mixed state. These observations immediately imply that
\beq
X_1 = \frac{1}{2}+\frac{1}{2N^2}|\tr(u_B)|^2, \quad \text{and} \quad Y_1 = \frac{1}{N^2} .
\eeq

 \begin{figure}[!tbp]
   \centering
\includegraphics[width=0.55\textwidth]{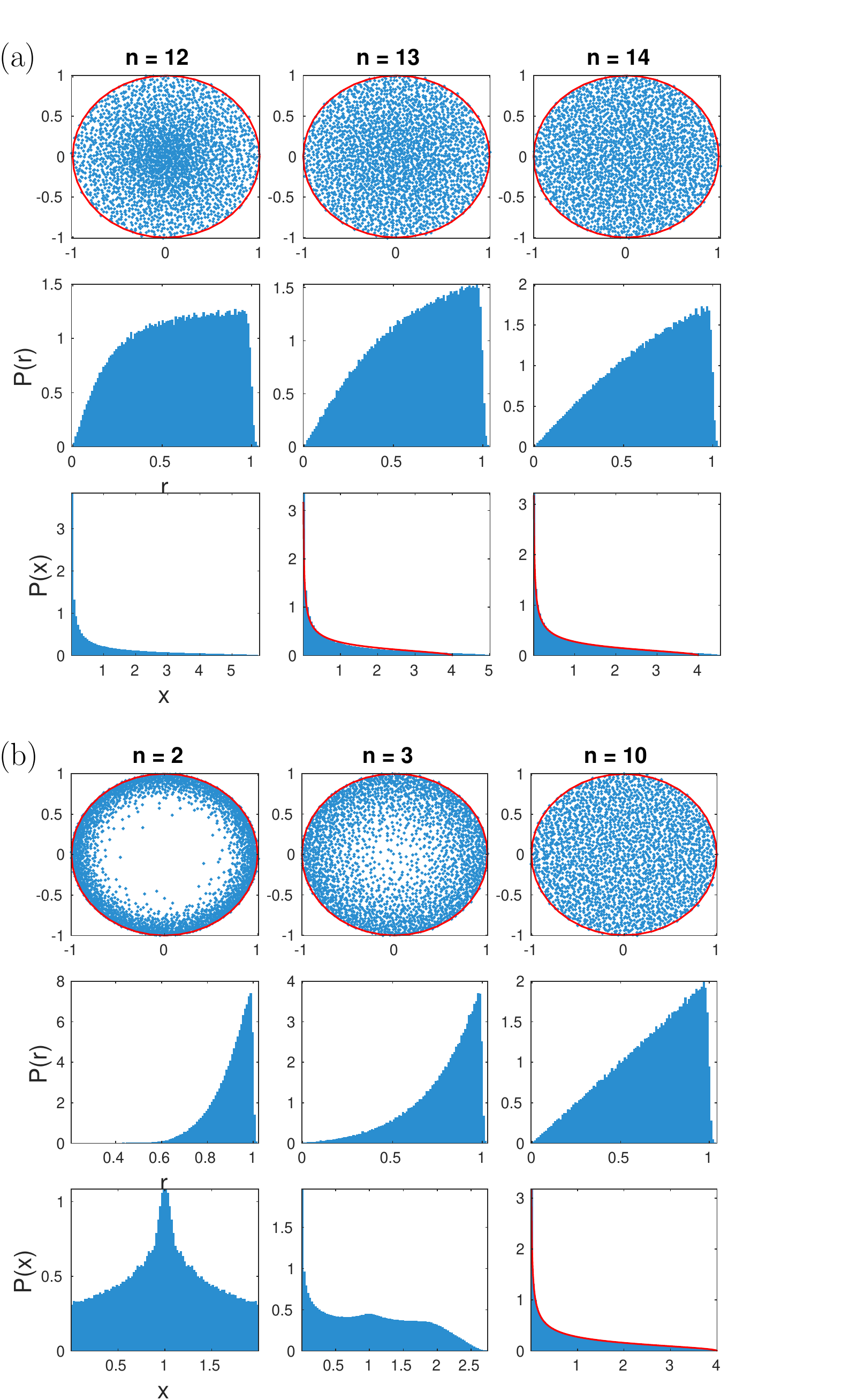}
\caption{As in Fig.  \ref{fig:dist_ud}   for $U$ selected as a controlled unitary gate. 
Third column plotted for $n =14$ and $n=10$ 
   illustrates  longer time required in this case to reach typicality.
   }
       \label{fig:dist_cu}
    \end{figure}

Taking Haar random unitary matrices  $u_B$ of size $N$  one defines  an ensemble
of controlled-unitary gates of order $N^2$, for which we evaluate
 the average purities of the associated density matrices 
 $\rho_R(U^{(n)})$ and $\rho_T(U^{(n)})$.
  Form  factor averaged over CUE matrices of size $N$
reads:
$\overline{|\tr(u_B^n)|^{2}}=n$ if $n \leq N$ and $N$ for $n \geq N$
-- see \cite{HKSSZ96}. Denoting this
additional averaging
by an overbar one obtains $\overline{X_1}\sim 1/2$
and $\overline{Y_1}= 1/N^2$.
Using the recursion relation from \cite{JMZL2017} and the CUE form factors
quoted above  for $n=2$ we arrive at,
\begin{align}
\begin{split}
\overline{X_2}&=\frac{N^6+2N^4-6N^2+4}{4N^2(N^2-1)^2} , \\
\overline{Y_2}&=\frac{5N^4 - 10N^2 +6}{4N^2(N^2-1)^2}.
\end{split}
\end{align}
The other details of the unitary gate $u^B$, are relevant to higher orders. It is also clear that
the sequence  $Y_n$
approaches the typical behavior earlier than the $X_n$. For instance, at $n=2$ we have 
$\overline{X_2}\sim 1/4$, while $\overline{Y_2}\sim  \frac{5}{4N^2}$.
In general, $ \overline{X_n} \sim \frac{1}{2^n}$, indicating that it takes a time $n^*
\sim 4 \log_2 N$ for the operators to thermalize, such that the
average operator entanglement is comparable to the average over  $U(N^2)$. 
Numerical data obtained for a typical controlled unitary gate 
presented in Figure~\ref{fig:dist_cu}  show that in this case
the thermalization time is longer in comparison to random diagonal gates. 
Even at $n=12$ one can see substantial deviations from the Girko circular law 
for the eigenvalues of the reshuffled matrix, while at $n=14$  the data 
become well described by the universal  Mar{\v c}enko-Pastur distribution. 
Spectral properties of the partially transposed matrix also reach typical behaviour around
  $n \sim 10$. These time scales are consistent with the $~\log_2N$ scale and the thermalization of the controlled-unitary gates occurs more slowly, but surely.

\section{Summary and outlook}    
\label{sec:summary}

In this work we have investigated nonlocal properties
of bipartite quantum gates acting on an $N \times M$ system.
Representing them in the plane
spanned by entangling power $e_p$ and gate typicality $g_t$,
we have analyzed the boundary of the allowed set $K_2$, 
which in turn enabled us to identify gates that correspond to critical points of the boundary 
and are distinguished by some particular properties.
Making use of the Cartan decomposition and the canonical form of a 
two-qubit gate \cite{KBG01,KC01}
we have described the boundaries analytically,
as they correspond to the edges and diagonals of the Weyl chamber.

As the Cartan decomposition is not effective for unitary matrices of order nine,
in the case of two-qutrit gates such an approach does not work, 
hence only some parts of the boundary of the set $K_3$  are known exactly.
For instance, the structure of $K_3$ is still unknown in the vicinity
of the right most point representing optimal gates, 
for which entangling power admits its maximal value,
$e_p=1$, and corresponds to maximally entangled states of a four--qutrit system. It is worth emphasizing that while such a gate does not exist for $N=2$ \cite{Higuchi2000},
the case of $N=6$ still remains open \cite{OpenProblems}.

A key issue addressed in this paper concerns  
nonlocal properties of a bipartite unitary gate applied sequentially.
Although local unitary operations performed after a single usage of a nonlocal gate
cannot change its entangling power, they do play a crucial role if  
the gate analyzed is performed several times.
Our result shows that an arbitrary small, but positive,
 entangling power of a nonlocal gate $U_{AB}$ is 
  sufficient to assure that the gate $U'_{AB}=U_{AB}V_{\rm loc}$ applied $n$ times 
will reach the entangling power typical to random unitary matrices exponentially fast.
Here $V_{\rm loc}= u_A\otimes u_B$ denotes  a random local unitary,
which is  drawn independently at each time step.
This statement illustrates the {\sl thermalization} of non-local properties of
bipartite gates with the interaction time
and sheds more light into the
properties of quantized chaotic dynamics,
in which nonlocal kicks coupling both subsystems
are interlaced by chaotic local evolution  \cite{Haake4}.

While the entangling power of a bipartite gate $U_{AB}$ determines 
the entangling power of time-evolutions augmented with local operators,
it is interesting to note that it can possibly determine 
the complexity of the corresponding many-body systems
 built out of them in various architectures.
As a concrete example, a product of $\otimes^L U_{AB}$ on a one-dimensional lattice of
$2L$ sites and its translation by one-site was studied recently
in terms of its correlation functions \cite{BKP2019}. 
It is not hard to infer from their results that the case when $U_{AB}$ has maximal 
entangling power allowed by the dimensions, corresponds to the case of a maximally chaotic many-body system. It is interesting to observe that qubits do not satisfy this condition,
while this is the case for qutrits \cite{Clarisse2005}. 
Thus we believe that our study is also relevant 
to a large body of  recent work around understanding
of quantum chaos for many-body systems.

\begin{acknowledgments}
We would like to thank Wojciech Bruzda and Dardo Goyeneche for several fruitful discussions and 
Suhail Ahmad Rather for crucial help with the qutrit phase-space. 
K.{\.Z}. acknowledges financial support by Narodowe Centrum Nauki under the
grant number DEC-2015/18/A/ST2/00274. AL and PM acknowledge financial support by the Department of Science and Technology, Govt. of India, under grant number DST/ICPS/QuST/Theme-3/2019/Q69.
\end{acknowledgments}

\appendix
\section{Bipartite unitary gates and four-party entangled pure
states}\label{app:4party}
The bijection between states on ${\cal H}^{(1)}\otimes {\cal H}^{(2)}$ and operators on
${\cal H}^{(1)} \cong {\cal H}^{(2)}$ is known in the physics 
litarature under the name Choi-Jamio{\l}kowski isomorphism,
which relates the set of pure states of a bipartite system
and the set of operations acting  on a simple system \cite{Bengtsson2007}.
Any normalized bipartite pure state $|\psi\rangle = \sum_{ij=1}^N x_{ij} |ij\rangle$ can
be written
as $(X\otimes \mathbb{1}) |\phi^+\rangle$, where 
$|\phi^+\rangle =\sum_{j=1}^N |jj\rangle /\sqrt{N}$
 is the maximally entangled state and 
$\langle i| X|j\rangle  =\, \sqrt{N} x_{ij}$.
Note that a state $|\psi\rangle$ is maximally entangled, 
if and only if the matrix $X$ is unitary,
as then its partial trace is maximally mixed, 
$\rho_A={\rm tr}_B|\psi\rangle \langle \psi|=XX^{\dagger}/N= \mathbb{1}/N$.
It is often convenient to make use of this
relation between the set $U(N)/U(1)$ of unitary quantum gates of order $N$
and the set of maximally entangled states in $N \times N$ system 
\cite{Bengtsson2007}.

The same relation can also be used in a more general set-up, if the system
${\cal H}^{(1)}$ is composite and describes two subsystems of sizes $N$ and $M\geq N$,
denoted $A$ and $B$ respectively. 
The system ${\cal H}^{(2)}$ of the same size $NM$ is also composite 
and contains two subsystems  $C$ and $D$ of dimensions $N$ and $M$ respectively. 
The matrix $X/\sqrt{NM}$ with elements $x_{i\alpha ,k \beta}$ describes now a $4$-party
pure state
$|\psi_{ABCD}\rangle = \sum_{ij }^N \sum_{\alpha \beta}^M x_{i\alpha ,k \beta } |i\alpha
k \beta \rangle$,
and can be considered as a four-index tensor
or a $NM \times NM$ matrix with composite indices.

Any bi-partite matrix $X$ acting on subsystems $AB$, thus defines a four-partite pure
state,
\begin{equation}
|\psi_{ABCD}\rangle = (X_{AB} \otimes \mathbb{1}_{CD})
|\phi_{AC}^+\rangle \otimes |\phi_{BD}^+\rangle .
\label{eq:four_parties}
\end{equation}
Note that the above formula does not factorize,
as the symbol $\otimes$ denotes tensor products
acting with respect to different partitions.
If the bipartite matrix acting on the subsystems $AB$
is unitary,  $X=U$,
 then the corresponding four party state 
$|\psi_{ABCD}\rangle$ is maximally 
entangled with respect to the partition $AB|CD$, so all
the components of the corresponding Schmidt vector of length $NM$, eigenvalues of
$\rho_{AB}={\rm tr}_{CD} |\psi_{ABCD}\rangle \langle \psi_{ABCD}|=UU^{\dagger}/NM$,
are equal to $1/NM$. Unitarity condition, $UU^{\dagger}=\mathbb{1}$,
implies then the maximal entanglement 
of the state  $|\psi_{ABCD}\rangle$
with respect to  the splitting $AB|CD$.

On the other hand, one can investigate whether
this state is entangled with respect to two 
other possible partitions, $AC|BD$ and $AD|BC$.
To this end one studies the partially reduced states
$\rho_{AC}={\rm tr}_{BD} |\psi_{ABCD}\rangle \langle \psi_{ABCD}|$ 
with spectrum $\lambda_i$, with $1 \leq i \leq N^2$
and
$\rho_{AD}={\rm tr}_{BC} |\psi_{ABCD}\rangle \langle \psi_{ABCD}|$
with spectrum $\mu_j$ with $1 \leq j \leq NM$.

For any four-index  matrix  $X_{ij,\alpha \beta}$ of size $NM$
it will be convenient to use
the following operations on its entries \cite{Bengtsson2007}:
the partial transpose, $X^{T_A}$, where $X_{i \beta,j \alpha}^{T_A}=X_{i \alpha ,j \beta
}$,
is also an $NM \times NM$ matrix, and 
the reshuffling, $X^{R}$, where $X_{ij,\alpha \beta }^{R}=X_{i \alpha ,j \beta }$ is an
$N^2 \times M^2$ dimensional array.
The first one, $T_A$,  represents transposition on the first subsystem only
and preservs hermiticity of $X$. 
The reshuffling $R$, corresponds to reshaping each block of a matrix into a vector,
does not preserve unitarity nor hermiticity.

%
It is easy to check \cite{Zyczkowski2004} that the  vector $\lambda$, 
equal to the spectrum of the positive matrix 
\beq
\label{eq:rhoR}
\rho_R(U)\equiv \rho_{AC}=U^R (U^R)^{\dagger}/(NM),
\eeq
coincides with the vector defining the  operator Schmidt decomposition 
of the scaled matrix $U$. Correspondingly, the vector $\mu$, forming the spectrum of 
\beq
\label{eq:rhoT}
\rho_{T_A}(U)\equiv \rho_{AD}=U^{T_A} (U^{T_A})^{\dagger}/(NM),
\eeq
appears in the Schmidt decomposition of the operator $U$ composed with the \textsc{swap} $S$ for
the symmetric case.

The reduced state of the subsystems $AC$ is maximally mixed 
 if  $\rho_{AC}=\mathbb{1}_{N^2}/N^2$
and corresponds to the maximum entanglement in the $AC|BD$ split. This in turn happens
when the rearrangement $U^R$ satisfies $U^R (U^R)^{\dagger}=\mathbb{1}_{N^2}/ (MN)$,
where $\mathbb{1}_{N^2}$
is the identity matrix of dimension $N^2$. 
In other words, for the symmetric case, $N=M$, if $U^R$ is also unitary,
 then $\rho_R(U)$ is maximally mixed and the subsystem
$AC$ is maximally mixed with $BD$. 
Hence the linear entanglement entropy $1-\tr_{AC}\rho_{AC}^2$,
based on the reshuffling of $U$ can serve as a measure of the entanglement in the four-party state in
Eq.~\eqref{eq:four_parties} with respect to the partition $AC|BD$.  

The state $|\psi_{ABCD}\rangle$
is maximally entangled with respect to  the third splitting $AD|BC$
if the Schmidt vector is flat, $\mu_j=1/MN$ for $j=1,\dots, NM$,
so that the matrix $X^{T_A}$ is unitary.
Entanglement for this partition can be thus characterized  
by the twin quantity $1-\tr_{AC}\rho_{AD}^2$.


Analyzing a bipartite unitary gate $U$, 
described by a four-index matrix $U_{a,b} = X_{i\alpha,j \beta}$, 
it is convenient to introduce the notion of {\sl multiunitarity} \cite{Goyeneche2015}.
For the symmetric case ($N=M$), a matrix $X$ of size $N^2$, 
wrtitten in the four-index notation, is called {\sl
$2$--unitary}, if the following three conditions are satisfied
\begin{eqnarray}
\label{eq:multi}
&
{\rm i)} \ \  X \in U(N^2) \Leftrightarrow
\sum_{ij} X_{ij,\alpha \beta }  {\bar X_{ij,\alpha' \beta'}}
=   \delta_{\alpha,\alpha'}  \delta_{\beta,\beta'},  \nonumber \\
&
{\rm ii)} \ \  X^R \in U(N^2) \Leftrightarrow
\sum_{i,\alpha} X_{i j,\alpha \beta }  {\bar X_{ij',\alpha \beta'}}=   \delta_{j,j'}  \delta_{\beta,\beta'},
 \nonumber \\
&
{\rm iii)} \ \  X^{T_A} \in U(N^2) \Leftrightarrow
 \sum_{i,\beta} X_{i j,\alpha \beta }  {\bar X_{ij',\alpha' \beta}}=   \delta_{j,j'}  \delta_{\alpha,\alpha'},
 \nonumber
  \\
  & 
\end{eqnarray}
so apart of $U$, also
 two other matrices with interchanged entries, $U^{T_A}$
and $U^R$, are unitary.
The corresponding four-index tensor $X_{i\alpha,j \beta}$ of size $N$, 
is called {\sl perfect},
if for any choice of two indices  out of four,
the matrix  of size $N^2$ obtained 
 by restructuring the four-index tensor into a matrix is unitary \cite{PYHP15}.
By construction,  any  $2$-unitary matrix  $U$ of order $N^2$
provides an example of a matrix which maximizes the entangling power, $e_p(U)=1$,
as both linear entanglement entropies $E(U)$ and  $E(US)$ are maximal.
Thus the corresponding four-party state (\ref{eq:four_parties})
is maximally entangled with respect to all three possible partitions.
Such states are called two-uniform \cite{Sc04} or absolutely maximally entangled (AME)
\cite{HCLRL12}. 

Interestingly, such states do not exist in a four-qubit system 
\cite{Higuchi2000}, as the total size of the Hilbert space is too 
small to find a state satisfying all necessary constraints.
This is equivalent to the known fact~\cite{Zanardi2000, Clarisse2005}
that there is no unitary matrix of size $N^2=4$,
for which the maximal value $e_p=1$  
of the entangling power is achieved,
which is consistent with the structure of the set $K_2$ 
plotted in Fig.  \ref{fig:ep_vs_gt}.
 In the complementary notation, there are no $2$-unitary
matrices of order four~\cite{Goyeneche2015}.
On the other hand AME states exists for larger systems
consisting of four qutrits,
 which is equivalent to the statement that there exists a $2$-unitary 
matrix of size $N^2=9$, which maximizes the entangling power $e_p$ \cite{Clarisse2005}.
For any $N=3,4,5$ and $N\ge 7$ there exist permutations matrices of size $N^2$ which are
$2$-unitary, and hence maximize the entangling power \cite{Clarisse2005}
and also correspond to AME states of four systems with $N$ levels each.
For $N=6$,  the non-existence of any $2$-unitary permutation
matrix of order $36$ is directly related to the famous 
problem of $36$ officers by Euler
and follows from the 
non-existence of two mutually orthogonal Latin Squares of size six.
The more general question as to whether there exists a $2$-unitary matrix
of size $N^2=36$ (not necessarily a permutation)
remains open \cite{OpenProblems}.

Two unitarity of a bipartite gate $U$,
 corresponds to a two-uniform pure state of four parties,
 maximally entangled with respect to all three partitions.
 Sometimes it is interesting to relax one requirement
 and analyze pure state  $|\psi_{ABCD}\rangle$
  for which only two partial traces out of three  are maximally mixed.
 This weaker condition corresponds to a unitary matrix $U$ of size $N^2$
 such that additionally $U^{T_A}$ {\sl or} $U^R$ is unitary.
 The class of unitary matrices such that the partial transposition $U^{T_A}$ 
 remains unitary  was studied 
in context of quantum operations preserving some given matrix algebra,
and a method to generate them numerically based on a kind of 
the Sinkhorn algorithm was proposed  \cite{DNP16,BN17}.
Such a technique based on alternating projections on manifolds converges \cite{LM08}, 
if we wish to assure that two unitarity conditions are satisfied,
so that two partial traces of the corresponding four-party state are fixed  \cite{DLP20},
but it will usually become less effective  if  three
conditions (\ref{eq:multi}) need to be fulfilled simultaneously.

The observations made in this Appendix for the symmetric case, $M=N$,
 can be  summarized  as follows. 
\begin{proposition}\label{prop:2u_ame}
For any unitary operator $U$ acting on a bipartite space $\mathcal{H}_{N}^{A}\otimes
\mathcal{H}_{N}^{B}$, the following are equivalent.
\begin{itemize}
\item[(a)] The unitary $U\in U(N^2)$ attains the global maximum of entangling power, that is,  $e_p(U)=1$, as both linear entanglement entropies $E(U)$ and  $E(US)$ are maximal.
\item[(b)] The bipartite unitary matrix $U$ is $2$-unitary. In other words both the transformed matrices $U^R$ and $U^{T_A}$ remain unitary.
\item[(c)] If $U_{AB}=U$, the pure state \[ |\psi_{ABCD}\rangle = (U_{AB} \otimes
\mathbb{1}_{CD})|\phi_{AC}^+\rangle \otimes |\phi_{BD}^+\rangle \] defined in
Eq.~\eqref{eq:four_parties} is maximally
entangled with respect to all possible bipartitions and thus forms an absolutely maximally entangled state of four quNits.
\item[(d)] The corresponding four-index tensor $u_{i\alpha,k\beta}$ whose elements describe the four-partite state
\[ |\psi_{ABCD}\rangle = \sum_{i,j=1 }^N \sum_{\alpha, \beta=1}^N u_{i\alpha ,j \beta }
|i\alpha j \beta \rangle\] is perfect.
\end{itemize}
\end{proposition}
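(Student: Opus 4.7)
The plan is to establish the four conditions are equivalent by showing that each of (a), (c), (d) is separately equivalent to (b), the structural statement of $2$-unitarity. The key technical ingredient will be the identities relating the operator entanglement entropies to purities of the induced density matrices $\rho_R(U)$ and $\rho_T(U)$ defined in Eq.~(\ref{eq:rhoRandTdefn}), together with the elementary fact that a density matrix $\rho$ on a $d$-dimensional space satisfies $\tr(\rho^{2})\ge 1/d$, with equality iff $\rho=\mathbb{1}_d/d$.

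For (a) $\Leftrightarrow$ (b): rewriting Eqs.~(\ref{Eq:EUUR}) and (\ref{Eq:EUSUT}) in terms of $\rho_R$ and $\rho_T$ gives $E(U)=1-\tr(\rho_R^{2})$ and $E(US)=1-\tr(\rho_T^{2})$, so that each entropy attains its ceiling $E(S)=1-1/N^{2}$ precisely when the corresponding state is maximally mixed. Since $U^R$ and $U^{T_A}$ are square $N^{2}\times N^{2}$ matrices in the symmetric case, maximal mixedness of $\rho_R(U)$ (resp.\ $\rho_T(U)$) is in turn equivalent to $U^R$ (resp.\ $U^{T_A}$) being unitary. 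By Eq.~(\ref{eqn:ep}), $e_p(U)=1$ is equivalent to simultaneous saturation of both entropies, hence to the pair of unitarity conditions that define (b).

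For (b) $\Leftrightarrow$ (c): a direct calculation of the three inequivalent partial traces of the four-party state $|\psi_{ABCD}\rangle$ from Eq.~(\ref{eq:four_parties}) yields $\rho_{AB}=UU^{\dagger}/N^{2}=\mathbb{1}_{N^{2}}/N^{2}$ automatically from unitarity of $U$, together with $\rho_{AC}=\rho_R(U)$ and $\rho_{AD}=\rho_T(U)$. Maximal entanglement across the partition $AC|BD$ is thus equivalent to $U^R\in U(N^{2})$, and maximal entanglement across $AD|BC$ is equivalent to $U^{T_A}\in U(N^{2})$; taken together with the automatic maximal entanglement across $AB|CD$, these are precisely the conditions for an AME state on four $N$-level parties. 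For (b) $\Leftrightarrow$ (d): perfection of the four-index tensor is by definition the statement that every choice of two of its four indices produces a unitary $N^{2}\times N^{2}$ matrix, and the three such regroupings are $U$ itself, $U^R$, and $U^{T_A}$; unitarity of $U$ is assumed and the remaining two are exactly the defining relations of $2$-unitarity.

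There is no deep obstacle here, as all steps are applications of definitions and the basic entropy inequality; the one point requiring care is dimensional bookkeeping. Specifically, the equivalence ``$\rho_R$ maximally mixed $\Leftrightarrow$ $U^R$ unitary'' (and analogously for $T_A$) rests on $U^R$ and $U^{T_A}$ being square, which uses the symmetric hypothesis $M=N$; in the rectangular generalisation treated in Appendix~\ref{app:4party} the corresponding statements must be phrased via full row- or column-rank partial isometries rather than unitarity, and the thresholds appearing in Eqs.~(\ref{Eq:EUUR})--(\ref{Eq:EUSUT}) must be rescaled accordingly.
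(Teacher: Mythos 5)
Your argument is correct and follows essentially the same route as the paper, whose "proof" of this proposition is precisely the preceding discussion in Appendix~\ref{app:4party}: each condition is reduced to unitarity of $U^R$ and $U^{T_A}$ via the identifications $E(U)=1-\tr\rho_R^2$, $E(US)=1-\tr\rho_T^2$, $\rho_{AC}=\rho_R(U)$, $\rho_{AD}=\rho_{T_A}(U)$, and the purity bound $\tr\rho^2\ge 1/N^2$ with equality iff $\rho$ is maximally mixed. Your closing remark on the rectangular case is a sensible caveat but lies outside the proposition, which is stated only for $M=N$.
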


\section{The stationarity of the parabola of powers of \textsc{swap}}
\label{app:lemmaswap}
\begin{lemma}\label{lem:swap1}
Define $f(u) \equiv e_{p}(u) - 2g_{t}(u) \bigl( 1-g_t(u) \bigr)$, 
where $e_{p}(u)$ and $g_{t}(u)$ are respectively the entangling power and gate typicality of a bipartite unitary $u$. The function $f(u)$ is extremised whenever $u=U_t=e^{itS}$ is a fractional power of the \textsc{swap} operator.
\end{lemma}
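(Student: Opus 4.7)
To prove Lemma~\ref{lem:swap1} I would carry out a direct first-order variational calculation at $u=U_t$ and verify that the derivative of $f$ vanishes along every tangent direction. I parametrize nearby unitaries as $U(\epsilon)=U_t(\mathbb{1}+i\epsilon H)$ with Hermitian $H$; since $\{iU_tH:H=H^\dagger\}$ exhausts $T_{U_t}U(N^2)$, stationarity at $U_t$ amounts to $\frac{d}{d\epsilon}f(U(\epsilon))\big|_{\epsilon=0}=0$ for every such $H$. Using Eqs.~\eqref{eqn:ep} and~\eqref{eqn:gt} to express $e_p,g_t$ through $E(U),E(US)$, and the value $g_t(U_t)=\sin^2 t$ from Eq.~\eqref{eq:swapparab}, the chain rule collapses the variation to
\begin{equation*}
\delta f = \frac{2}{E(S)}\bigl[\sin^2 t\,\delta E(U_t)[H]+\cos^2 t\,\delta E(U_tS)[H]\bigr].
\end{equation*}
By Eqs.~\eqref{Eq:EUUR}--\eqref{Eq:EUSUT} and the linearity of reshuffling $R$ and partial transpose $T_A$, each variation further reduces to $\frac{4}{N^4}\,\mathrm{Im}\bigl[\tr(M_\sharp\,(U_tH)^\sharp\,U_t^{\sharp\dagger})\bigr]$ where $\sharp\in\{R,T_A\}$ and $M_\sharp=U_t^\sharp U_t^{\sharp\dagger}$.

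The next step is to exploit the very special algebra of $U_t=\mathbb{1}\cos t+iS\sin t$. Using $\mathbb{1}^R=NP$ with $P=|\Phi^+\rangle\langle\Phi^+|$, $S^R=S$, $\mathbb{1}^{T_A}=\mathbb{1}$, $S^{T_A}=NP$, together with $S|\Phi^+\rangle=P|\Phi^+\rangle=|\Phi^+\rangle$, I obtain
\begin{align*}
M_R &= \sin^2 t\,\mathbb{1} + N^2\cos^2 t\,P,\\
M_T &= \cos^2 t\,\mathbb{1} + N^2\sin^2 t\,P,
\end{align*}
and the eigenrelations $U_t^{R\dagger}|\Phi^+\rangle=(N\cos t-i\sin t)|\Phi^+\rangle$ and $U_t^{T_A\dagger}|\Phi^+\rangle=(\cos t-iN\sin t)|\Phi^+\rangle$. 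The $\mathbb{1}$-parts of $M_R$ and $M_T$ produce the imaginary parts of $\tr((U_tH)^\sharp U_t^{\sharp\dagger})$, both of which equal $\tr(U_tHU_t^\dagger)=\tr H\in\mathbb{R}$ via the basic identity $\tr(A^\sharp(B^\sharp)^\dagger)=\tr(AB^\dagger)$ (valid for $\sharp=R,T_A$), so these contributions drop out.

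The decisive step is the treatment of the $P$-parts. Here I would use the two pulling identities
\begin{equation*}
\langle\Phi^+|M^R|\Phi^+\rangle=\tfrac{1}{N}\tr(M),\qquad
\langle\Phi^+|M^{T_A}|\Phi^+\rangle=\tfrac{1}{N}\tr(SM),
\end{equation*}
whose sign/operator difference (traceable to $S^{T_A}=NP$ versus $\mathbb{1}^R=NP$) is the key algebraic feature. Applied to $M=U_tH$ and combined with $\tr(U_tH)=\cos t\,\tr H+i\sin t\,\tr(SH)$ and $\tr(SU_tH)=\cos t\,\tr(SH)+i\sin t\,\tr H$, a short computation gives
\begin{align*}
\mathrm{Im}\bigl[\tr(P(U_tH)^R U_t^{R\dagger})\bigr] &= \tfrac{\sin t\cos t}{N}\bigl[N\tr(SH)-\tr H\bigr],\\
\mathrm{Im}\bigl[\tr(P(U_tH)^{T_A} U_t^{T_A\dagger})\bigr] &= \tfrac{\sin t\cos t}{N}\bigl[\tr H-N\tr(SH)\bigr].
\end{align*}
Weighting these by the prefactors $\sin^2 t\cdot N^2\cos^2 t$ and $\cos^2 t\cdot N^2\sin^2 t$ dictated by $\delta f$, the two symmetric terms cancel pointwise in $H$, giving $\delta f=0$ for every Hermitian $H$ and every dimension $N$.

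The main obstacle is the correct derivation of the two pulling identities and, in particular, recognising why the partial-transpose version picks up an $S$ inside the trace while the reshuffling version does not: this is precisely the sign asymmetry that engineers the cancellation. Once those identities are in place, everything else is linear bookkeeping on $U_t$, and the argument is uniform in $N$, establishing stationarity of $f$ along the entire one-parameter family $U_t=e^{itS}$.
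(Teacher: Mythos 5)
Your proof is correct, and while it uses the same first-order variational framework as the paper, the execution is genuinely different and in one respect more complete. The paper's proof restricts the perturbation $H$ from the outset to be traceless and orthogonal to $S$ (arguing that overlap with $\mathbb{1}$ is a phase and overlap with $S$ merely shifts $t$ along the family, where $f\equiv 0$), and then shows that $\delta E(U_t)$ and $\delta E(U_tS)$ \emph{each} vanish because the four traces $\tr(H^R\mathbb{1}^R)$, $\tr(H^RS)$, $\tr((SH)^R\mathbb{1}^R)$, $\tr((SH)^RS)$ all vanish under those assumptions. You instead allow arbitrary Hermitian $H$, so you must track the full variation of the combination $f$; your computation correctly shows that $\delta E(U_t)$ and $\delta E(U_tS)$ are individually proportional to $\sin t\cos t\,[N\tr(SH)-\tr H]$ and its negative (so neither is stationary on its own for general $H$), and that the weights $\sin^2 t$ and $\cos^2 t$ coming from $1-2g_t=\cos 2t$, together with the complementary $P$-coefficients $N^2\cos^2 t$ in $M_R$ and $N^2\sin^2 t$ in $M_T$, produce an exact pointwise cancellation. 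I verified the key ingredients: $M_R=\sin^2 t\,\mathbb{1}+N^2\cos^2 t\,P$ and $M_T=\cos^2 t\,\mathbb{1}+N^2\sin^2 t\,P$ (the cross terms vanish since $SP=PS=P$), the Hilbert--Schmidt invariance $\tr(A^\sharp(B^\sharp)^\dagger)=\tr(AB^\dagger)$, and the two pulling identities $\langle\Phi^+|M^R|\Phi^+\rangle=\tr(M)/N$ and $\langle\Phi^+|M^{T_A}|\Phi^+\rangle=\tr(SM)/N$, all of which hold in the paper's index conventions. What your route buys is a uniform treatment of the entire tangent space with no case split and an explicit exhibition of the cancellation mechanism; what the paper's route buys is the stronger (restricted) statement that the operator entanglements themselves are stationary transverse to the $\{\mathbb{1},S\}$ plane, at the cost of a separate verbal argument for the excluded directions.
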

\begin{proof}
Operators close to arbitrary fractional powers of \textsc{swap} $U_t$, with $0 < \epsilon \ll 1$ 
are 
\beq
U_{t,\epsilon}=\exp( itS +i \epsilon H) \approx \exp(i t S)\left(\mathbb{1}+i \epsilon
H\right),
\eeq
where $H$ is a Hermitian operator. We may require without loss of generality that $H$ is
traceless, that is $\tr H=0$, as the overall phase will make no difference to
calculations. We may also assume that
$H$ is orthogonal to $S$, that is $\tr(HS)=\tr(SH)=0$, as any overlap with $S$ will be
equivalent to only shifting $t$ to a new value.
The difference $\delta U_t=U_{t,\epsilon}-U_t$ is given by 
\beq
\label{eq:deltaUt}
\delta U_t = i \epsilon \left(\cos t\,  H  + i \sin t \, SH \right).
\eeq
We will show that $\delta E(U_t)=0$ and $\delta E(U_tS)=0$, thus under such perturbations
$\delta e_p(U_t)=0$ and $\delta g_t(U_t)=0$ and finally $\delta f(u=U_t)=0$.

From Eq.~(\ref{Eq:EUUR}) it follows that 
\beq
\delta E(U)=-\frac{4}{N^2}\, \text{Re}\, \tr \left( \delta U^R U^{R\,\dagger} U^R U^{R
\,\dagger}\right).
\eeq
From the linearity of the reshuffling operation,
$\delta U_t^R=(U_{t,\epsilon})^R - U_t^R=(\delta U_t)^R$.
From this and Eq.~(\ref{eq:deltaUt}) we get 
\beq
\delta U_t^R =(\delta U_t)^R= i\epsilon \left( \cos t H^R + i \sin t (SH)^R \right).
\eeq
To show that $\tr \left( \delta U_t^R U_t^{R\,\dagger} U_t^R U_t^{R \,\dagger}\right)=0$, we note that it
involves $\tr(H^R \mathbb{1}_R)$, $\tr(H^R S)$, $\tr((SH)^R \mathbb{1}_R)$ and
$\tr((SH)^R S)$. It is straightforward to verify that when $H$ is orthogonal to $S$ and is traceless,
all of these vanish. In a similar way it is easy to show also that
$\tr \left( \delta U_t^{T_A} U_t^{T_A \, \dagger} U_t^T U_t^{T_A \,\dagger}\right)=0$.
Thus $\delta f(u=U_t) =0$ when $u=U_t$ is a power of the \textsc{swap} $S$, except when $\delta u$ is along $S$. In the latter case, $f(u)=0$ strictly and there is no variation of $f$, establishing that $f(u)$ is indeed an extremum
 if $u=e^{itS}$ is a fractional power of the \textsc{swap}.
\end{proof}

\section{Proof of the theorem concerning 
average entangling power
$e_p[\mathcal{U}^{(n)}]$}
\label{app:details}

\begin{theorem}
Let $U$ and $V$ be unitary operators on $\mathcal{H}^A_N\otimes\mathcal{H}^B_M$ and
$u_A$, $u_B$ be sampled from the groups
$U(N)$ and $U(M)$ of unitary matrices according to their Haar measures, then 
\beq
\begin{split}
& \left \br e_p\left[ U (u_A\otimes u_B) V \right]\right \kt_{u_A, u_B} \\ &=
e_p(U)+e_p(V) -\, e_p(U)e_p(V)/\overline{e_p},
\end{split}
\eeq
where the average entnagling power reads
$\overline{e_p}=\br e_p(W)\kt_{W}$,
 and $W$ is sampled according to the Haar measure
on the unitary group
$U(NM)$.
\end{theorem}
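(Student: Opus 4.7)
The plan is to reduce the computation to Haar integrals of the form $\int u^{\otimes 2}\otimes (u^*)^{\otimes 2}\,du$ and evaluate them by Schur--Weyl duality, then reassemble the pieces into a closed expression in $e_p(U)$, $e_p(V)$ and $\overline{e_p}$. By the formula~\eqref{eq:gen_ep}, it suffices to compute the two Haar averages
\beq
\mathcal{A}_R \equiv \br \tr\bigl[(W^R(W^R)^\dagger)^2\bigr]\kt_{u_A,u_B}, \quad
\mathcal{A}_T \equiv \br \tr\bigl[(W^{T_A}(W^{T_A})^\dagger)^2\bigr]\kt_{u_A,u_B},
\eeq
for $W = U(u_A\otimes u_B)V$, and to show that the resulting combination yields the claimed product form after the normalization prescribed by~\eqref{eq:gen_ep}.

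First I would rewrite each of these traces as a trace on the four-fold tensor product $(\mathcal{H}^A\otimes\mathcal{H}^B)^{\otimes 2}$ of $W^{\otimes 2}\otimes (W^\dagger)^{\otimes 2}$ composed with a fixed permutation operator: the standard swap-trick identifies $\tr[(W^R(W^R)^\dagger)^2]$ with a trace in which a swap acts only on the two copies of $\mathcal{H}^A$, while $\tr[(W^{T_A}(W^{T_A})^\dagger)^2]$ is the analogous trace with the swap acting only on the two copies of $\mathcal{H}^B$. After inserting $W = U(u_A\otimes u_B)V$, the local unitaries appear exclusively in blocks of the form $(u_A\otimes u_B)^{\otimes 2}\,X\,(u_A^\dagger\otimes u_B^\dagger)^{\otimes 2}$ for an operator $X$ built from $U$, $V$ and the chosen permutation, so the Haar averaging factorizes into independent integrals over $U(N)$ and $U(M)$.

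Each of these integrals is evaluated via Schur--Weyl duality: the map $X \mapsto \int u^{\otimes 2} X (u^\dagger)^{\otimes 2}\,du$ projects $X$ onto the commutant of $U(d)$ acting on $\mathbb{C}^d\otimes\mathbb{C}^d$, which is the two-dimensional algebra spanned by $\{\mathbb{1},S_d\}$; the two coefficients are known rational functions of $d$ multiplied by $\tr X$ and $\tr(X S_d)$. Performing the projection on both subsystems produces four terms, indexed by the insertion of $\{\mathbb{1}_A,S_A\}\times\{\mathbb{1}_B,S_B\}$ between $U$ and $V$. In each term the resulting trace factorizes into a piece depending only on $U$ and one depending only on $V$, and each such piece is again a trace of a four-fold object against one of $\{\mathbb{1},S_A,S_B,S_AS_B\}$; these are precisely the same four quantities that, via the swap-trick identities, produce $\tr[(U^R(U^R)^\dagger)^2]$, $\tr[(U^{T_A}(U^{T_A})^\dagger)^2]$ and their trivial normalizations, and likewise for $V$.

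Assembling the pieces, $\mathcal{A}_R$ and $\mathcal{A}_T$ become explicit bilinear forms in the four local invariants of $U$ and of $V$. Substituting back into~\eqref{eq:gen_ep} and simplifying with the help of~\eqref{eq:generalepgtavg} should collapse everything to the claimed expression $e_p(U)+e_p(V) - e_p(U)e_p(V)/\overline{e_p}$. The main obstacle is this final reorganization: cross-terms which mix the reshuffling invariant of $U$ with the partial-transpose invariant of $V$ appear a priori, and it is the cancellation of exactly these mixed contributions --- forced by the combinatorial structure of the Schur--Weyl coefficients together with the particular normalization of $e_p$ in~\eqref{eq:gen_ep} --- that yields the clean multiplicative form. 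Once this cancellation is verified, the recursion~\eqref{eq:epUavg_nloc_Theo} follows by induction on $n$, and the companion identity~\eqref{eq:gtUpown} for $g_t$ is obtained by the same route with the roles of the reshuffling and partial-transpose insertions interchanged.
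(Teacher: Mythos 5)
Your proposal follows essentially the same route as the paper's proof in Appendix~\ref{app:details}: both express the relevant invariants as traces of $W^{\otimes 2}\otimes (W^{\dagger})^{\otimes 2}$ against fixed swap operators, use Schur's lemma (Schur--Weyl duality) to reduce the Haar twirl over $(u_A\otimes u_B)^{\otimes 2}$ to the commutant spanned by $\{\mathbb{1},S_{AC}\}\otimes\{\mathbb{1},S_{BD}\}$, and then factorize the result into the local invariants of $U$ and of $V$ (the paper merely packages the twirled object as the Zanardi operator $Q=V^{\otimes2}\Omega_p^{++}V^{\dagger\otimes2}$ with $\Omega_p^{++}\propto \Pi^+_{AC}\Pi^+_{BD}$, which is the same projector decomposition). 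The final reorganization you flag as the main obstacle is exactly what the paper carries out by evaluating the four coefficients via $\tr(Q)$, $\tr(Q\,S_{AC})$, $\tr(Q\,S_{BD})$ and $\tr(Q\,S_{AC}\otimes S_{BD})$, which come out to $1$, $1-\tilde{e}_p^{\text{max}}e_p(V)$, $1-\tilde{e}_p^{\text{max}}e_p(V)$ and $1$, so the feared cross terms indeed collapse to the claimed product formula.
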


\begin{proof}
Consider an extended Hilbert space
$\mathcal{H}_N^A\otimes\mathcal{H}_M^B\otimes\mathcal{H}^C_A\otimes\mathcal{H}^D_M$ where
$\mathcal{H}^C_N$ are $\mathcal{H}^D_M$ are copies of $\mathcal{H}^A_N$ and
$\mathcal{H}^B_M$. Using the identity $\tr(\rho_A \otimes \rho_C S_{AC})=\tr( \rho_A^2)$
where $\rho_C$ is a copy of $\rho_A$ and $S_{AC}$ is the \textsc{swap} operator, the entangling power of $U$ acting on $\mathcal{H}^A_N\otimes\mathcal{H}^B_M$
was written in \citep{Zanardi2000} as
\beq
\label{eq:ep_diffdim}
\begin{split}
e_p(U) &= \frac{2}{\tilde{e}_p^{\text{max}}}\,\tr(U^{\otimes
2}\Omega^{++}_p{U^\dagger}^{\otimes 2}\Pi^-_{AC}).
\end{split}
\eeq
Here $\Pi^{-}_{AC}=2^{-1}(\mathbb{1}- S_{AC})$ is the projector over the anti-symmetric
subspace of $\mathcal{H}^{A}_N\otimes\mathcal{H}^{C}_N$,
and 
$\Omega^{++}_p=\omega_{AC}^+ \otimes \omega_{BD}^+$ and $\omega_{AC}^+=\int d\mu(\psi_A)
(|\psi_A \kt \br \psi_A|\otimes |\psi_A\kt \br \psi_A|)$,
while $\omega_{BC}^{+}$ is an identical operator. When $d\mu(\psi_A)$ is the Haar measure
on states in $\mathcal{H}^A_N$, recognizing that
$\omega^{+}_{AC}$ has support only on the symmetric subspace, group theoretic
arguments involving Schur's lemma were used in \citep{Zanardi2000} to show that
$\Omega^{++}_p = 4C_AC_B\Pi^+_{AC}\Pi^+_{BD}$. Here $C_A^{-1} =N(N+1)$, $C_B^{-1}
=M(M+1)$, $\Pi^{+}_{AC}=2^{-1}(\mathbb{1}+ S_{AC})$ is the projector over the symmetric
subspace of $\mathcal{H}^{A}_N\otimes\mathcal{H}^{C}_N$, while $\Pi^{+}_{BD}$ is a similar projector on $\mathcal{H}^{B}_M\otimes\mathcal{H}^{D}_M$.

This forms a convenient starting point for us, as the local unitary averaged entangling power is
\beq
\label{eq:ep_diffdim_2}
\left\langle e_p(U (u_A \otimes u_B) V)\right\rangle_{u_A,u_B} =
\frac{2}{\tilde{e}_p^{\text{max}}}\,\tr(U^{\otimes 2}\left\langle
Q\right\rangle{U^\dagger}^{\otimes 2}\Pi^-_{AC})
\eeq
where $Q = V\otimes V\Omega^{++}_pV^\dagger\otimes V^\dagger$, and 
\beq
\begin{split}
\left\langle Q\right\rangle = &\int d\mu(u_A)\,d\mu(u_B)(u_A\otimes u_B)^{\otimes 2}\,\\
&V^{\otimes 2}\,\Omega^{++}_p\, {V^\dagger}^{\otimes 2}\,(u_A^{\dagger} \otimes u_B^{
\dagger})^{\otimes 2}.
\end{split}
\eeq
Since the local unitaries are sampled independently, the average over $u_A$, $u_B$ can be
done separately. Note that $V\otimes V$ acts on $AB$ and its copy $CD$, while
$\Omega^{++}_p$ acts on $AC$ and $BD$ independently.
Note also that $\left\langle Q\right\rangle$ is self-adjoint and hence diagonalizable.
For any $x_A\in U(N)$, $[(x_A)^{\otimes 2},\left\langle Q\right\rangle] = 0$ due to the unitary invariance of the Haar measure. With similar reasoning $[(x_B)^{\otimes
2},\left\langle Q\right\rangle] = 0$, $\forall x_B\in U(M)$. Since $(x_A)^{\otimes 2}$,
$(x_B)^{\otimes 2}$ acts irreducibly on the totally symmetric and anti-symmetric
subspaces, it follows from the above commutation relations and Schur's lemma \cite{Cornwell1997} that $\br Q \kt$ can be written as a linear
combinations of projectors on the symmetric and anti-symmetric subspaces,
\beq
\label{eq:Q_avg}
\begin{split}
\br Q\kt = &\alpha_1\Pi^+_{AC}\Pi^+_{BD} + \alpha_2\Pi^-_{AC}\Pi^-_{BD} + \\
&\alpha_3\Pi^+_{AC}\Pi^-_{BD} + \alpha_4\Pi^-_{AC}\Pi^+_{BD}, 
\end{split}
\eeq
where $\alpha_l =
[\tr(\Pi^{\pm}_{AC}\Pi^{\pm}_{BD})]^{-1}\tr(Q\,\Pi^{\pm}_{AC}\Pi^{\pm}_{BD})$;
$l=\{1,\dots,4\}$.
That the operator $Q$ can be used for finding $\alpha_l$ instead of $\br Q\kt$ follows
from the fact that $(u_A^{ \dagger} \otimes u_C^{ \dagger}) \Pi^{+}_{AC}
( u_{A} \otimes u_{C}) = \Pi^{+}_{AC}$.

Next, we evaluate expressions for $\tr(Q)$ and $\tr(Q\, S_{AC})$, as follows 
(summation over repeated indices is assumed):
\begin{align}\label{eq:a1}
\begin{split}
&\tr(Q) = \tr\left(\Omega^{++}_p\right) = 1,\\
&\tr(Q\, S_{AC}\otimes S_{BD})\\
&= \frac{1}{N(N+1)}\frac{1}{M(M+1)}\tr\left(V\otimes V(1+S_{AC}+S_{BD}\right.\\
&\quad\qquad\qquad\qquad\qquad \left. {}+S_{AC}\otimes S_{BD})V^\dagger\otimes V^\dagger
S_{AC}\otimes S_{BD}\right)
\end{split}
\end{align}

Now,
\begin{align}\label{eq:a2}
\begin{split}
&\tr\left(V\otimes V\,S_{AC}V^\dagger\otimes V^\dagger\,S_{AC}\otimes S_{BD}\right)\\
=& \br i_1\alpha_1 j_1\beta_1\vert V\otimes V\vert i_2\alpha_2 j_2\beta_2\kt\br
j_2\alpha_2 i_2\beta_2\vert V^\dagger\otimes V^\dagger\kt j_2\beta_2 i_2\alpha_2\kt\\
 =&N^2M
\end{split} 
\end{align}
Similarly,
\begin{align}
\begin{split}
\tr\left(V\otimes V S_{BD}V^\dagger\otimes V^\dagger S_{AC}\otimes S_{BD}\right) &=
NM^2\\
\tr\left(V\otimes V S_{AC}\otimes S_{BD}V^\dagger\otimes V^\dagger S_{AC}\otimes
S_{BD}\right) &= N^2M^2
\end{split}
\end{align}
Combining these trace relations in Eq.~\eqref{eq:a1} gives,
\beq
\tr(Q\, S_{AC}\otimes S_{BD})=1
\eeq
To compute $\tr(Q\,S_{AC})$ and $\tr(Q\,S_{BD})$, note that
\begin{align}\label{eq:rhoR_intermsof_R_swaps}
\begin{split}
\tr(\rho_{AC}^2) &= \frac{1}{N^2M^2}\tr\left[\left(V^R(V^R)^\dagger\right)^2\right]\\
&=\frac{1}{N^2M^2}\tr\left(V\otimes V S_{AC}V^\dagger\otimes V^\dagger S_{AC}\right),
\end{split}
\end{align}
where the equality in the second line can be seen via a similar calculation as in
Eq.~\eqref{eq:a2} (see also \cite{Wang2003}). Similarly,
\begin{align}\label{eq:rhoT_intermsof_T_swaps}
\begin{split}
\tr(\rho_{AD}^2) &=
\frac{1}{N^2M^2}\tr\left[\left(V^{T_A}(V^{T_A})^\dagger\right)^2\right]\\
&=\frac{1}{N^2M^2}\tr\left(V\otimes V S_{BD}V^\dagger\otimes V^\dagger S_{AC}\right).
\end{split}
\end{align}
Using Eq.~\eqref{eq:gen_ep}, Eq.~\eqref{eq:rhoR_intermsof_R_swaps}, and
Eq.~\eqref{eq:rhoT_intermsof_T_swaps},
\begin{align}
\begin{split}
&\tr(Q\,S_{AC})\\
& =\frac{1}{N(N+1)}\frac{1}{M(M+1)}\tr\left(V\otimes V(1+S_{AC}+S_{BD}\right.\\
&\quad\qquad\qquad\qquad\qquad \left. {}+S_{AC}\otimes S_{BD})V^\dagger\otimes V^\dagger
S_{AC}\right)\\
&= \frac{1}{N(N+1)}\frac{1}{M(M+1)}\tr\left(NM^2 + N^2M\right.\\
&\quad\qquad\qquad\qquad \left. {}+\tr\left(V^R(V^R)^\dagger\right)^2 +
\tr\left(V^{T_A}(V^{T_A})^\dagger\right)^2\right)\\
&=1 - \tilde{e}_p^{\text{max}} e_p(V)
\end{split}
\end{align}
Similarly,
\beq
\tr(Q\,S_{AC}) = 1 - \tilde{e}_p^{\text{max}} e_p(V).
\eeq

Using Eq.~\eqref{eq:ep_diffdim_2} and the traces evaluated above, we get
\begin{align}\label{eq:traces_Q}
\begin{split}
\tr(Q) &= \tr[Q\,(S_{AC}\otimes S_{BD})] =1,\\
\tr(Q\,S_{AC}) &= \tr(Q\,S_{BD}) = 1 - \tilde{e}_p^{\text{max}} e_p(V) .
\end{split}
\end{align}

Thus the local unitary averaged entangling power is given by
\beq
\br e_p[U(u_A \otimes u_B ) V]\kt_{u_A,u_B} = e_p(U) + \left[1-
\frac{e_p(U)}{\overline{e_p}}\right]e_p(V),
\eeq
where $\overline{e_p}$ is the CUE averaged entangling power in
Eq.~(\ref{eq:generalepgtavg}).
\end{proof}

\begin{corollary}
The local unitary averaged gate typicality is \cite{JMZL2017},
\beq
\br g_t{(U^{(n)})}\kt_{\rm Loc} = {\bar g_t}\left[1 - \left(1 - g_t(U)/ {\bar g_t} \right)^n \right],
\eeq
where the average value $ {\bar g_t}$
is given in Eq. (\ref{eq:generalepgtavg}).

\end{corollary}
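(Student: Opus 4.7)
The plan is to first establish the gate-typicality analog of Theorem~\ref{thm:ep_avg},
\[
\langle g_t[U(u_A\otimes u_B)V]\rangle_{u_A,u_B} = g_t(U) + g_t(V) - \frac{g_t(U)\,g_t(V)}{\overline{g_t}},
\]
and then iterate it with $V=U^{(n-1)}$ to obtain a first-order linear recursion whose closed-form solution is the claimed exponential saturation.

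For the averaging formula I would follow the same machinery as in the proof of Theorem~\ref{thm:ep_avg}, except that rather than invoking the auxiliary operator $\Omega_p^{++}$ (which is tailored specifically to $e_p$), I would work directly with the two purities appearing in Eq.~(\ref{eq:gen_ep}). Since $g_t(W)$ is an affine function of $\tr(\rho_{AC}^2(W))$ and $\tr(\rho_{AD}^2(W))$, I would compute $\langle\tr(\rho_{AC}^2(W))\rangle_{u_A,u_B}$ and $\langle\tr(\rho_{AD}^2(W))\rangle_{u_A,u_B}$ separately for $W=U(u_A\otimes u_B)V$. Writing each purity as a two-copy trace in the style of Eqs.~(\ref{eq:rhoR_intermsof_R_swaps})--(\ref{eq:rhoT_intermsof_T_swaps}), moving the local-unitary averaging inside, and applying the same Schur-lemma twirl as in the theorem, the averaged operator lies in the span of the four invariant projectors $\{\Pi^{\pm}_{AC}\Pi^{\pm}_{BD}\}$. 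Its expansion coefficients are fixed by four elementary traces involving $V$ that evaluate to simple expressions in $\tr(\rho_{AC}^2(V))$ and $\tr(\rho_{AD}^2(V))$ \emph{individually}. Wrapping with $U^{\otimes 2}$ and tracing against the appropriate swap then produces $\langle\tr(\rho_{AC}^2(W))\rangle$ and $\langle\tr(\rho_{AD}^2(W))\rangle$ as bilinear expressions in the purities of $U$ and $V$. Assembling the $g_t$-weighted combination and simplifying yields the claimed formula.

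Given the averaging formula, I would iterate by setting $V=U^{(n-1)}$, noting that the outermost local pair $u_{A_{n-1}},u_{B_{n-1}}$ is statistically independent of everything in $V$. This gives the recursion
\[
\langle g_t(U^{(n)})\rangle_{\rm Loc} = g_t(U) + \Bigl(1 - \frac{g_t(U)}{\overline{g_t}}\Bigr)\,\langle g_t(U^{(n-1)})\rangle_{\rm Loc},
\]
with initial value $a_1 := \langle g_t(U^{(1)})\rangle_{\rm Loc} = g_t(U)$. Writing $a_n := \langle g_t(U^{(n)})\rangle_{\rm Loc}$, $r := 1 - g_t(U)/\overline{g_t}$, and forcing $g_t(U)$, this is a standard first-order linear recurrence whose solution is the geometric sum $a_n = g_t(U)\,(1-r^n)/(1-r) = \overline{g_t}\,[1 - r^n]$, exactly reproducing the claim.

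The main obstacle is the separation step. The $e_p$ proof exploits the convenient fact that the relevant traces of the $\Omega_p^{++}$-based operator $Q$ depend only on the \emph{sum} $\tr(\rho_{AC}^2(V)) + \tr(\rho_{AD}^2(V))$, which is exactly what $e_p(V)$ tracks; indeed the paper's calculation shows that $\tr(QS_{AC})$ and $\tr(QS_{BD})$ coincide. For $g_t$, however, the \emph{difference} of the two purities must also be extracted, so the symmetric $\Omega_p^{++}$ setup alone is insufficient. This forces a more explicit trace bookkeeping, most naturally by computing $\tr(V^{\otimes 2}S_{AC}V^{\dagger\otimes 2}S_{BD})$ and its $S_{AC}\leftrightarrow S_{BD}$ analog directly using that $V^{\otimes 2}$ commutes with the global swap $S_{AC}S_{BD}$. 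Once this additional trace is in hand, the bilinear form of the answer is forced by the same Schur-lemma mechanism as for $e_p$: local-unitary twirling projects onto a four-dimensional commutant, and any scalar invariant of $W$ that is linear in the two purities inherits the same multiplicative-perturbation structure $X(U)+X(V)-X(U)X(V)/\overline{X}$.
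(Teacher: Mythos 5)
Your proposal is correct, and it reaches the result by a different (more explicit) route than the paper. The paper's own proof is essentially one line: it observes that $g_t$ admits the \emph{same} two-copy trace representation as $e_p$, namely $g_t(U) = \tr\bigl(U^{\otimes 2}\Omega_p^{+-}U^{\dagger\otimes 2}\Pi_{AC}^-\bigr)$ with $\Omega_p^{+-} = C_A\tilde{C}_B\,\Pi^+_{AC}\Pi^-_{BD}$ and $\tilde{C}_B^{-1}=N(N-1)$, so the entire Schur-lemma twirl of Theorem~\ref{thm:ep_avg} can be rerun verbatim with $\Omega_p^{++}$ replaced by $\Omega_p^{+-}$; the antisymmetric projector on the $BD$ copies builds the required sign difference between the two purities into the starting operator. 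You instead twirl the two purities $\tr(\rho_{AC}^2)$ and $\tr(\rho_{AD}^2)$ separately, which works: the averaged operator lies in the span of the four projectors $\Pi^{\pm}_{AC}\Pi^{\pm}_{BD}$, and the extra traces you identify (e.g.\ $\tr(V^{\otimes 2}S_{AC}V^{\dagger\otimes 2}S_{BD})$ and its partner, plus $\tr(V^{\otimes2}S_{AC}V^{\dagger\otimes2}S_{AC}S_{BD})=\tr(S_{BD})$ by commutation with the global swap) are all computable, yielding a $2\times 2$ affine map on the pair of purities. What your approach buys is transparency about \emph{why} $e_p$ and $g_t$ are the natural invariants: they are precisely the linear combinations that decouple this $2\times 2$ recursion into scalar ones, which the paper states but does not exhibit. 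What the paper's approach buys is brevity. One caution: your closing assertion that \emph{any} scalar invariant linear in the two purities inherits the structure $X(U)+X(V)-X(U)X(V)/\overline{X}$ is not true for a generic linear combination --- as the paper notes, the operator entanglements $E$ and $E(\cdot\,S)$ themselves mix under the recursion --- it holds only for the eigencombinations $e_p$ and $g_t$. This does not damage your argument, since you compute the specific $g_t$ combination explicitly, but the general claim should be dropped. The recursion and its closed-form solution $a_n=\overline{g_t}\,[1-r^n]$ with $r=1-g_t(U)/\overline{g_t}$ are handled correctly.
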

\begin{proof}
When $M=N$, gate typicality in Eq.~\eqref{eqn:gt} is given by
\beq
g_t(U) = \tr(U^{\otimes 2}\Omega_p^{+-}{U^{\dagger}}^{\otimes 2}\Pi_{AC}^-),
\eeq
where $\Omega_p^{+-} = C_A\tilde{C}_B\Pi^+_{AC}\Pi^-_{BD}$, $\tilde{C}^{-1}_B = N(N-1)$.
Starting with the above relation for $M\neq N$ and proceeding the same way as in the proof
of entangling power, proves the corollary.
\end{proof}

\end{document}